\DeclareMathAlphabet\mathbfcal{OMS}{cmsy}{b}{n}
\declarecommand{\QED}{\hfill\ensuremath{\blacksquare}}%
\declarecommand{\covarmhalf}{\covar^{-\nicefrac{1}{2}}}
\declarecommand{\covarhalf}{\covar^{\nicefrac{1}{2}}}
\declarecommand{\Ihat}{\est{I}}
\declarecommand{\nsample}{M}
\declarecommand{\natseq}[2]{\N_{[#1,#2]}} 
\declarecommand{\posN}{{\N_+}}
\declarecommand{\ssum}{\textstyle \sum}
\declarecommand{\refinenv}[1]{}
\declarecommand{\dodd}[1]{#1}
\declarecommand{\dd}[1]{{\hat{#1}}}
\declarecommand{\est}[1]{{\hat{#1}}}
\declarecommand{\rnd}[1]{{#1}}
\renewcommand{\vec}[1]{{\mathbf{#1}}}
\renewcommand{\amb}{{\dd{\mathcal{A}}}}
\declarecommand{\xd}{{n_x}}
\declarecommand{\ud}{{n_u}}
\declarecommand{\xs}{\rnd{x}}
\declarecommand{\xinits}{\bar{\xs}}
\declarecommand{\xinitsr}{\tilde{\xs}}
\declarecommand{\us}{\rnd{u}}
\declarecommand{\muls}{\rnd{w}}
\declarecommand{\mulsdd}{\dd{\muls}}
\declarecommand{\mulse}{\rnd{v}}
\declarecommand{\muld}{{n_w}}
\declarecommand{\mulpdf}{\mathit{F}}
\declarecommand{\net}{\mathcal{N}}
\declarecommand{\sequence}[1]{\mathbf{#1}}
\declarecommand{\ssg}{\Omega}
\declarecommand{\sag}{\mathcal{F}}
\declarecommand{\probg}{\mathbb{P}}
\declarecommand{\ssmul}{\Omega}
\declarecommand{\samul}{\mathcal{F}}
\declarecommand{\probmul}{{\prob_\muls}}
\declarecommand{\ssmulp}{{\ssmul^\infty}}
\declarecommand{\samulp}{{\samul^\infty}}
\declarecommand{\probmulp}{{\probmul^\infty}}
\declarecommand{\ssmuls}{{\ssmul^\nsample}}
\declarecommand{\samuls}{{\samul^\nsample}}
\declarecommand{\probmuls}{{\probg}}
\declarecommand{\ssmule}{{\ssmul}}
\declarecommand{\samule}{{\samul}}
\declarecommand{\probmule}{{\prob_\mulse}}
\declarecommand{\ssmulep}{\ssmulp}
\declarecommand{\samulep}{\samulp}
\declarecommand{\probmulep}{{\probmule^\infty}}
\declarecommand{\ssxinit}{E}
\declarecommand{\saxinit}{\mathcal{E}}
\declarecommand{\probxinit}{\mathbb{G}}
\declarecommand{\Es}[3]{\E} 
\declarecommand{\Er}[2]{\E} 
\declarecommand{\Ere}[2]{\E} 
\declarecommand{\Ese}[3]{\E} 
\declarecommand{\mean}{\mu}
\declarecommand{\covar}{\Sigma}
\declarecommand{\covare}{\covar_0}
\declarecommand{\covardr}{\dd{\covar}_{\mathrm{dr}}}
\declarecommand{\rmean}{r_{\mean}(\beta)}
\declarecommand{\rcovar}{r_{\scriptscriptstyle{\covar}}(\beta)}
\declarecommand{\tmean}{t_{\mean}}
\declarecommand{\tcovar}{t_{\scriptscriptstyle{\covar}}}
\declarecommand{\partsim}{{N}}
\declarecommand{\Val}{V}
\declarecommand{\ie}{\emph{i.e.,}}
\declarecommand{\eg}{\emph{e.g.,}}
\declarecommand{\iff}{iff}
\declarecommand{\iid}{\emph{i.i.d.}}
\declarecommand{\mss}{\emph{m.s.s.}}
\declarecommand{\emss}{\emph{e.m.s.s.}}
\declarecommand{\sdp}{SDP}
\declarecommand{\sgeq}{\succeq} 
\declarecommand{\sgt}{\succ} 
\declarecommand{\sleq}{\preceq} 
\DeclareMathOperator{\subG}{subG}
\newlist{inlinelist}{enumerate*}{1}
\setlist*[inlinelist,1]{%
  label=(\roman*),
}
\newlist{proofsteps}{enumerate}{1}
\setlist[proofsteps,1]{%
  label=\bfseries{\Roman*},
  itemindent=0pt,
  wide =0.5\parindent,
  listparindent=0pt,%
  afterlabel={{.\nobreakspace}}
}
\newlist{theoremcontents}{enumerate}{1}
\setlist[theoremcontents,1]{%
  label=\bfseries{\Roman*},
  itemindent=0pt,
  wide=0.5\parindent,
  itemsep=0.0pt,
  parsep=0.0pt,
  listparindent=-10pt,%
  afterlabel={{.\nobreakspace}}
}
\newcommand*{\inlineequation}[2][]{%
  \begingroup
    \refstepcounter{equation}%
    \ifx\\#1\\%
    \else
      \label{#1}%
    \fi
    \relpenalty=10000 %
    \binoppenalty=10000 %
    \ensuremath{%
      #2%
    }%
    ~~\@eqnnum
  \endgroup
}
\newcommand{\preliminary}[1]{{\leavevmode\color{Cornsilk4} #1}}
\renewcommand{\preliminary}[1]{}
\newcommand{\todo}[3]{\preliminary{{\color{#1} \small [TODO] \textbf{#2} --- #3}}}
\newcommand{\mathijs}[1]{\todo{DarkOliveGreen3}{Mathijs}{#1}}
\def\isarchiv{1}
\newcommand{\archiv}[2]{{\leavevmode\color{black!100} #1}}
\newcommand{\archiv}[2]{{\leavevmode\color{black!100} #2}}
\newtheorem{assumption}{Assumption}
\newcommand{\customlabel}[2]{%
\protected@write \@auxout {}{\string \newlabel {#1}{{#2}{\thepage}{#2}{#1}{}} }%
\hypertarget{#1}{#2}
}
\title[Data-driven distributionally robust LQR]{Data-driven distributionally robust LQR with multiplicative noise} 
\author{\Name{Peter Coppens} \Email{peter.coppens@kuleuven.be}\\
\Name{Mathijs Schuurmans} \Email{mathijs.schuurmans@kuleuven.be}\\
\Name{Panagiotis Patrinos} \Email{panos.patrinos@kuleuven.be}\\
\addr STADIUS, Department of Electrical Engineering (ESAT),
KU Leuven, Leuven, Belgium.}
\begin{document}

\maketitle

\begin{abstract}%
  We present a data-driven method for solving the linear quadratic regulator problem for systems with multiplicative disturbances, the distribution of which is only known through sample estimates. We adopt a distributionally robust approach to cast the controller synthesis problem as semidefinite programs. Using results from high dimensional statistics, the proposed methodology ensures that their solution provides mean-square stabilizing controllers with high probability even for low sample sizes. As the sample size increases the closed-loop cost approaches that of the optimal controller produced when the distribution is known. We demonstrate the practical applicability and performance of the method through a numerical experiment. 
\end{abstract}

\begin{keywords}%
  data driven, distributionally robust, linear quadratic regulation, multiplicative noise, stochastic optimal control%
\end{keywords}

\section{Introduction}
We will develop controllers for linear systems with time-varying parametric uncertainty, which may cover a wide range of system classes extensively studied in the literature. For example, we obtain Linear Parameter Varying (LPV) systems when the disturbance is observable at each time step \citep{Wu1996,Byrnes1979}, Linear Difference Inclusions (LDIs) when it is unknown but norm-bounded \citep{Boyd1994} and stochastic systems with multiplicative noise when it varies stochastically \citep{Wonham1967}.

In many practical applications, however, the distribution of the disturbance is not known. These traditional control approaches either make a boundedness assumption on the disturbance or on its moments, which allows for a fully robust approach \citep{ElGhaoui1995}. 
Such approaches, however, disregard any statistical information that may be obtained on the distribution of the disturbances. Our aim, instead, is to design linear controllers which use sampled data to improve performance over fully robust approaches, while inheriting many of the system-theoretical guarantees of a robust control strategy. 
To this end, we adopt a distributionally robust (DR) approach \citep{dupacova_minimax_1987,Delage2010} towards solving the infinite-horizon \emph{Linear Quadratic Regulator} (LQR) problem, where we minimize the expected cost for the worst-case distribution in a so-called \emph{ambiguity set} computed based on the available data such that it contains the true distribution with high probability. 
Such a DR approach addresses most of the difficulties associated with learning automatically, since the ambiguity set directly models the uncertainty in the sample-based estimates against which the controllers will be robust.
Similar techniques were recently studied in \cite{Schuurmans2019safe} for stochastic jump linear systems and in \cite{Dean2019} for deterministic systems, where the system matrices $A$ and $B$ are learned from data. 

The work of \cite{Gravell2019} also deals with learning control of linear systems with multiplicative noise, albeit from a different perspective. They employ a policy gradient algorithm, which requires the initial guess for the control gain to be stabilizing. By contrast, obtaining such a controller is the main goal of our approach. 

Our main contributions are summarized as follows. Leveraging recent results from high dimensional statistics, we provide practical high-probability confidence bounds for the ambiguity sets, which depend only on known quantities (\Cref{sec:bounds}). We then extend the solution of the (\emph{nominal}) infinite horizon LQR problem with known distribution to related DR counterparts which account for the ambiguity on the disturbance distribution. Whenever the mean of the disturbance is known, we show that the DR problem is equivalent to a semidefinite program (\sdp{}) which has the same form as the nominal one. 
Next, we extend the formulation to the setting in which both the mean and the covariance are only known to lie in an ellipsoidal ambiguity set (\Cref{sec:mean}), for which we can only approximate the optimal controller.

 \archiv{}{Due to space limitations, only sketches of the proofs are provided here. We refer the reader to the technical report \cite{Coppens2019}\mathijs{Add identifier to bibtex entry} for the full versions of the proofs. }

\subsection{Notation}
Let $\Re$ denote the reals\archiv{,}{{ }and} $\N$ the naturals\archiv{{ }and $\posN = \N \setminus \{0\}$}{}. For symmetric matrices $P, Q$ we write $P \sgt Q$ ($P \sgeq Q$) to signify that $P - Q$ is positive (semi)definite and denote by $\otimes$ the Kronecker product. We assume that all random variables are defined on a probability space $(\Omega, \mathcal{F}, \prob)$, with $\Omega$ the sample space, $\mathcal{F}$ its associated $\sigma$-algebra and $\prob$ the probability measure. 
Let $\rnd{y} : \Omega \rightarrow \Re^n$ be a random vector defined on $(\Omega, \mathcal{F}, \prob)$. With some abuse of notation we will write $y \in \mathbb{R}^n$ to state the dimension of this random vector. Let $\prob_y$ denote the distribution of $y$, \ie{} $\prob_y(A) = \prob[y \in A]$. Then, a trajectory $\{y_i\}_{i = 1}^{N}$ of identically and independently distributed (\iid) copies of $\rnd{y}$ is defined by the distribution it induces.
That is, for any $A_0, \ldots, A_N \in \mathcal{F}$ we define $\prob_{y_{0}, \ldots, y_{N}}(A_0 \times \dots \times A_N) \dfn \prob[y_0 \in A_0 \land \dots \land y_N \in A_N] = \prod_{i=0}^{N} \prob_y(A_i)$. This definition can be extended to infinite trajectories $\{\rnd{y}_i\}_{i\in \N}$ by Kolmogorov's existence theorem \cite{Billingsley1995}.
We will write the expectation operator as $\E$. \archiv{We denote by $\E\left[ y \mid z \right]$ the conditional expectation with respect to $z$. }{}Let
$\mathcal{M}$ denote the set of probability measures defined on $(\Re^{\muld}, \B)$, with $\B$ the Borel $\sigma$-algebra of $\Re^\muld$.

\section{Problem statement} \label{sec:nominal}
Consider the stochastic discrete-time system with input- and state-multiplicative noise given by:
\begin{align}
  \xs_{k+1} &= A(\muls_k)\xs_k+B(\muls_k) \us_k \label{eq:sys1} \\
\textrm{with}\qquad\ A(\muls)&\dfn A_0+ \sum_{i=1}^\muld \muls^{(i)}A_i,\quad B(\muls)\dfn B_0+ \sum_{i=1}^\muld \muls^{(i)}B_i,\nonumber
\end{align}
where at each time $k$, $\xs_k \in \mathbb{R}^\xd$ denotes the state, $\us_k \in \mathbb{R}^\ud$ the input and $w_k \in \mathbb{R}^\muld$ an \iid{} copy of a square integrable random variable $w$ distributed according to $\probmul$. Let $w^{(i)}$ denote the $i$-th element of vector $w$. We introduce the following shorthands: $\vec{A} \dfn \trans{\smallmat{\trans{A}_1 & \ldots & \trans{A}_\muld}}$, 
$\vec{B} \dfn \trans{\smallmat{ \trans{B}_1 & \ldots & \trans{B}_\muld}}$, $\vec{A}_0 \dfn \trans{\smallmat{ \trans{A}_0 & \trans{\vec{A}} }}$, $\vec{B}_0 \dfn \trans{\smallmat{ \trans{B}_0 & \trans{\vec{B}}}}$.
We also define $\covare \dfn \smallmat{1 & \trans{\mean} \\ \mean & \covar + \mean \trans{\mean}}$, where $\mean \dfn \Er{\muls}{\probmul}\left[ \muls \right]$, $\covar \dfn \Er{\muls}{\probmul} \left[ (\muls - \mu) \trans{(\muls - \mu)} \right]$. 

\subsection{Nominal stochastic LQR problem and solution}
The primary goal is to solve the following LQR problem:
\begin{equation} \label{eq:lqr1}
  \begin{aligned}
    & \minimize_{\us_0, \us_1, \ldots} & & \Es{\muls}{\probmul}{\infty} \left[ \sum_{k=0}^{\infty} \trans{\xs_k} Q \xs_k + \trans{\us_k} R \us_k \right] \\
    & \stt & & \xs_{k+1} = A(\muls_k)\xs_k+B(\muls_k)\us_k, \quad k\in\N \\
    &&& \xs_0 = \xinits,
  \end{aligned}
\end{equation}
where we assume that $Q \sgt 0$ and $R \sgt 0$. The solution of 
\eqref{eq:lqr1} will yield a controller that renders the closed-loop system exponentially stable in the mean square sense, which is defined as follows.
\begin{definition}[(Exponential) Mean Square Stability]
  We say that an autonomous system $x_{k+1} = A(w_k) x_k$ is mean square stable (\mss) \iff{} $\Es{w}{\probmul}{k} \left[ \trans{\xs_k} \xs_k\right] \rightarrow 0$ as $k\rightarrow \infty$. It is exponentially mean square stable (\emss) \iff{} there exists a pair of positive constants $\gamma \in \left(0, \, 1\right)$ and $c$ such that $\Es{w}{\probmul}{k}\left[ \trans{\xs_k} \xs_k \right] \leq c\gamma^k \nrm{x_0}$ for all $k \in \N$ and for each $x_0 \in \Re^\xd$. 
\end{definition}
This property can be verified using the classical Lyapunov operator \citep{Morozan1983}:
\begin{theorem}[Lyapunov stability] \label{the:lyapunov}
  For the autonomous system $x_{k+1} = A(w_k) x_k$ the following statements are then equivalent:
  \begin{inlinelist}
    \item it is \mss,
    \item it is \emss,
    \item $\exists P \sgt 0$:
  \end{inlinelist}
    \begin{equation} \label{eq:lyapunov}
      P - \trans{\vec{A}_0}(\covare \otimes P)\vec{A}_0 \sgt 0.
    \end{equation}
\end{theorem}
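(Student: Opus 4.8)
The plan is to reduce the stochastic stability question to the spectral analysis of a single linear operator on the space of symmetric matrices, and then to recognize the matrix inequality \eqref{eq:lyapunov} as a Lyapunov certificate for that operator.

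First I would track the second-moment matrix $X_k \dfn \E[\xs_k \trans{\xs_k}]$. Since $\xs_k$ is a measurable function of $\muls_0, \ldots, \muls_{k-1}$ while $\muls_k$ is independent of this history, conditioning on the past and writing $A(\muls) = \sum_{i=0}^{\muld} \bar\muls^{(i)} A_i$ with $\bar\muls \dfn \trans{[\,1 \;\; \trans{\muls}\,]}$ (so that $\bar\muls^{(0)} = 1$) yields the closed recursion $X_{k+1} = \mathcal{L}(X_k)$, where $\mathcal{L}(X) \dfn \E[A(\muls) X \trans{A(\muls)}] = \sum_{i,j=0}^{\muld} [\covare]_{ij}\, A_i X \trans{A_j}$, using $[\covare]_{ij} = \E[\bar\muls^{(i)} \bar\muls^{(j)}]$. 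Because $X_k \sgeq 0$, the condition $\E[\trans{\xs_k}\xs_k] = \mathrm{tr}(X_k) \to 0$ defining \mss{} is equivalent to $X_k \to 0$, i.e. to $\mathcal{L}^k(X_0) \to 0$ for every $X_0 \sgeq 0$ (equivalently every initial state).

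Next I would compute the adjoint of $\mathcal{L}$ with respect to the trace inner product and match it to the left-hand side of \eqref{eq:lyapunov}. A direct index computation gives $\mathcal{L}^*(P) = \sum_{i,j} [\covare]_{ij}\, \trans{A_i} P A_j = \trans{\vec{A}_0}(\covare \otimes P)\vec{A}_0$, so \eqref{eq:lyapunov} reads $P - \mathcal{L}^*(P) \sgt 0$. Since $\covare \sgeq 0$, both $\mathcal{L}$ and $\mathcal{L}^*$ map the positive semidefinite cone into itself and share the spectral radius $\rho(\mathcal{L})$. The proof then rests on the classical equivalence, for a positive operator, between $\rho(\mathcal{L}) < 1$, convergence $\mathcal{L}^k \to 0$, exponential convergence, and strict feasibility of \eqref{eq:lyapunov}. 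Concretely: (a) if $\rho(\mathcal{L}) < 1$ the series $P \dfn \sum_{k=0}^\infty (\mathcal{L}^*)^k(I)$ converges to some $P \sgeq I \sgt 0$ with $P - \mathcal{L}^*(P) = I \sgt 0$, giving statement (iii); (b) conversely, given (iii), compactness provides $\gamma \in (0,1)$ with $\mathcal{L}^*(P) \sleq \gamma P$, so that $V_k \dfn \mathrm{tr}(P X_k)$ satisfies $V_{k+1} = \langle \mathcal{L}^*(P), X_k\rangle \leq \gamma V_k$; since $\lambda_{\min}(P)\,\mathrm{tr}(X_k) \leq V_k$ this yields the geometric decay defining \emss{}; (c) \emss{} trivially implies \mss{}, and \mss{} forces $\mathcal{L}^k \to 0$, hence $\rho(\mathcal{L}) < 1$, closing the cycle.

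The routine parts are the moment recursion and the Kronecker bookkeeping. The step I expect to be the main obstacle is the equivalence between stability ($\rho(\mathcal{L}) < 1$ / \mss) and strict feasibility of \eqref{eq:lyapunov}: the ``only if'' direction hinges on summability of $(\mathcal{L}^*)^k(I)$, which requires that positivity of the operator genuinely controls $\rho(\mathcal{L}) = \lim_k \|\mathcal{L}^k\|^{1/k}$ and the associated geometric bound on the iterates. I would discharge this by invoking the Lyapunov theory for positive operators (Perron--Frobenius on the PSD cone) as established in \citet{Morozan1983}, rather than reproving it from scratch.
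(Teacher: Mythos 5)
Your proof is correct, but it takes a genuinely different route from the paper's. The paper's proof outsources the entire stability equivalence to \citep[Lemma 1]{Morozan1983} --- namely that \mss{} holds \iff{} there exists $P \sgt 0$ with $P - \E[\trans{A(\muls)}PA(\muls)] \sgt 0$, and that \mss{} and \emss{} coincide for such systems --- so its only real work is the algebraic identity $\E[\trans{A(\muls)}PA(\muls)] = \trans{\vec{A}_0}(\covare \otimes P)\vec{A}_0$, obtained by splitting $A(\muls)x$ into its mean part $A(\mean)x$ and centered part and recombining the two contributions into $\covare$. You instead re-derive the stochastic Lyapunov lemma itself: you reduce everything to the deterministic linear operator $\mathcal{L}(X)=\E[A(\muls)X\trans{A(\muls)}]$ acting on symmetric matrices, identify the left-hand side of \eqref{eq:lyapunov} as $P - \mathcal{L}^*(P)$ (your augmented-vector bookkeeping with $\bar{\muls}=(1,\muls)$ and $\covare = \E[\bar{\muls}\trans{\bar{\muls}}]$ is in fact a cleaner way to land on the Kronecker form than the paper's mean/covariance split), and close the cycle $\rho(\mathcal{L})<1 \Rightarrow \text{(iii)} \Rightarrow \emss{} \Rightarrow \mss{} \Rightarrow \rho(\mathcal{L})<1$ via a Neumann series and the decaying Lyapunov function $\tr(P X_k)$. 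What your route buys is self-containedness and explicit \emss{} constants ($c = \lambda_{\max}(P)/\lambda_{\min}(P)$ and the decay rate $\gamma$); what the paper's buys is brevity. One remark: the step you flag as the main obstacle is not actually one, so your fallback on Perron--Frobenius theory from \citep{Morozan1983} is dispensable. Once $\rho(\mathcal{L})<1$ is in hand, summability of $(\mathcal{L}^*)^k(I)$ follows from Gelfand's formula in finite dimensions with no positivity input whatsoever; cone-preservation of $\mathcal{L}^*$ is needed only to conclude $P \sgeq I \sgt 0$, and, in the converse direction, positivity of $\mathcal{L}$ is needed only to pass from $\tr(\mathcal{L}^k(x_0\trans{x_0})) \to 0$ to operator convergence through the span of rank-one positive semidefinite matrices. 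Nothing beyond elementary finite-dimensional spectral theory is required, so your argument can stand entirely on its own.
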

\archiv{
\begin{proof}
  See Appendix~\ref{app:lyapunov}.
\end{proof}
}

\sloppy The LQR problem \eqref{eq:lqr1} has been studied for many variations of \eqref{eq:sys1} \citep{Morozan1983, Costa1997}. The following proposition is then similar to many classical results in literature:
\begin{proposition} \label{the:nominal}
  Consider a system with dynamics \eqref{eq:sys1} and the associated LQR problem \eqref{eq:lqr1}. Assuming that \eqref{eq:sys1} is mean square stabilizable, \ie{} there exists a $K$ and $P \sgt 0$ such that \eqref{eq:lyapunov} holds for the closed-loop system $\xs_{k+1} = (A(\muls_k) + B(\muls_k) K)\xs_k$,
  then the following statements hold.
  \begin{theoremcontents} 
    \item \label{the:nominal:first} The optimal solution of \eqref{eq:lqr1} is given by $K_{\infty} = -(R + G(P_\infty))^{-1}H(P_\infty)$, with $P_{\infty}$ the solution of the following Riccati equation:
      \begin{equation} \label{eq:nominal_riccati}
        P_{\infty} = Q + F(P_{\infty}) - \trans{H(P_\infty)}(R + G(P_\infty))^{-1}H(P),
      \end{equation}
      where \sloppy
      $F(P) \dfn \trans{\vec{A}_0} (\covare \otimes P) \vec{A}_0$,
      $G(P) \dfn \trans{\vec{B}_0}(\covare \otimes P) \vec{B}_0$,
      $H(P) \dfn \trans{\vec{B}_0}(\covare \otimes P) \vec{A}_0$.
    \item \label{the:nominal:third} The controller $K_{\infty}$ stabilizes \eqref{eq:sys1} in the mean square sense. 
    \item \label{the:nominal:second} The solution of the Riccati equation is found by solving the following \sdp{}: 
      \begin{equation} \label{eq:sdp}
        \begin{aligned}
          & \minimize_{P}  & & - \tr{P} \\
          & \stt & & \begin{bmatrix}
            Q - P + F(P) & \trans{H(P)} \\ H(P) & R + G(P)
          \end{bmatrix} \sgeq 0, \\
          &&& P \sgeq 0.\
        \end{aligned}
      \end{equation}
  \end{theoremcontents}

    
\end{proposition}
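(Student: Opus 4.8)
The plan is to recognize first that the operators are disturbance moments, $F(P) = \E[\trans{A(\muls)}PA(\muls)]$, $G(P) = \E[\trans{B(\muls)}PB(\muls)]$ and $H(P) = \E[\trans{B(\muls)}PA(\muls)]$, which follows by expanding $A(\muls)$ and $B(\muls)$ and substituting $\E[\muls^{(i)}\muls^{(j)}] = (\covar + \mean\trans{\mean})_{ij}$ --- precisely the block structure encoded in $\covare$. Writing $\mathcal{R}(P) \dfn Q + F(P) - \trans{H(P)}(R + G(P))^{-1}H(P)$ for the Riccati map in \eqref{eq:nominal_riccati} and $\mathcal{L}_K(P) \dfn \E[\trans{(A(\muls)+B(\muls)K)}P(A(\muls)+B(\muls)K)]$ for the closed-loop second-moment operator, the two are tied together by the completion-of-squares identity $\mathcal{R}(P) = \min_K [Q + \trans{K}RK + \mathcal{L}_K(P)]$, whose Loewner-minimizer is $K = -(R+G(P))^{-1}H(P)$; here $R + G(P) \sgt 0$ because $R \sgt 0$ and $G(P) \sgeq 0$ whenever $P \sgeq 0$.

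For claim \ref{the:nominal:first} I would argue by dynamic programming. The one-step Bellman operator sends a quadratic $\trans{\xs}P\xs$ to $\trans{\xs}\mathcal{R}(P)\xs$, with minimizing input $\us = -(R+G(P))^{-1}H(P)\xs$, by the identity above. Value iteration $P_{k+1} = \mathcal{R}(P_k)$ from $P_0 = 0$ then yields a Loewner-nondecreasing sequence, since $\mathcal{R}$, being a pointwise minimum over $K$ of the affine monotone maps $P \mapsto Q + \trans{K}RK + \mathcal{L}_K(P)$, is itself monotone. The mean-square-stabilizability hypothesis supplies a gain whose (finite) cost matrix bounds the sequence from above, so $P_k \to P_\infty$ with $P_\infty = \mathcal{R}(P_\infty)$; a standard verification argument then identifies $\trans{\xs}P_\infty\xs$ with the optimal value and $K_\infty = -(R+G(P_\infty))^{-1}H(P_\infty)$ with an optimal stationary policy.

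Claim \ref{the:nominal:third} is a direct computation: substituting $K_\infty$ into $\mathcal{L}_{K_\infty}(P_\infty)$ and using the Riccati identity $F(P_\infty) - \trans{H(P_\infty)}(R+G(P_\infty))^{-1}H(P_\infty) = P_\infty - Q$ collapses everything to $P_\infty - \mathcal{L}_{K_\infty}(P_\infty) = Q + \trans{K_\infty}RK_\infty \sgt 0$. The left-hand side is exactly the Lyapunov expression \eqref{eq:lyapunov} for the closed-loop matrix $A(\muls)+B(\muls)K_\infty$, and $P_\infty \sgeq 0$ (from nonnegativity of the value function, whence in fact $P_\infty \sgeq Q + \trans{K_\infty}RK_\infty \sgt 0$), so \Cref{the:lyapunov} delivers mean-square (indeed exponential) stability. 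For claim \ref{the:nominal:second}, a Schur complement on the block linear matrix inequality --- valid because $R+G(P) \sgt 0$ --- shows that \sdp{} feasibility is equivalent to $P \sleq \mathcal{R}(P)$ with $P \sgeq 0$. The crux is the comparison statement that every feasible $P$ satisfies $P \sleq P_\infty$: from $P \sleq \mathcal{R}(P) \sleq Q + \trans{K_\infty}RK_\infty + \mathcal{L}_{K_\infty}(P)$ and the fixed-point identity $P_\infty = Q + \trans{K_\infty}RK_\infty + \mathcal{L}_{K_\infty}(P_\infty)$ one gets $P - P_\infty \sleq \mathcal{L}_{K_\infty}(P - P_\infty)$, and iterating the monotone, positivity-preserving operator $\mathcal{L}_{K_\infty}$ --- which converges to $0$ since $K_\infty$ is stabilizing --- forces $P - P_\infty \sleq 0$. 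As the trace is monotone on the positive semidefinite cone, $\tr{P} \leq \tr{P_\infty}$ over the feasible set with equality only at $P = P_\infty$, so the \sdp{} returns $P_\infty$.

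I expect the main obstacle to lie in the rigorous core of claim \ref{the:nominal:first}: proving that value iteration converges to a genuine fixed point and that the resulting stationary gain is optimal among all (possibly time-varying, history-dependent) input sequences, rather than merely among stationary ones. This is exactly where mean-square stabilizability is indispensable, as it provides the uniform upper bound that makes the monotone Riccati recursion converge and keeps the infinite-horizon value finite.
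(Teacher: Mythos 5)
Your proposal is correct, and for the first two claims it follows essentially the same route as the paper: value iteration $P_{k+1}=\mathcal{R}(P_k)$ from $P_0=0$, monotone increase, an upper bound supplied by the stabilizing gain from the \mss{} hypothesis, convergence to a fixed point, and the observation that the Riccati identity $P_\infty - \mathcal{L}_{K_\infty}(P_\infty) = Q + \trans{K_\infty}RK_\infty \sgt 0$ is precisely the Lyapunov condition \eqref{eq:lyapunov}, so \Cref{the:lyapunov} applies. Two differences are worth noting. First, where you invoke ``a standard verification argument,'' the paper does the actual work: it telescopes the Riccati equation along closed-loop trajectories (its equations \eqref{eq:telescope}--\eqref{eq:bndsvaluefct}) to show $\trans{\xs_0}P_\infty\xs_0$ is attained by $\us=K_\infty\xs$ and is optimal over all admissible input sequences, and it runs a second telescoping argument to prove \emph{uniqueness} of the Riccati solution --- a point your write-up never states, though it follows implicitly from your part-III maximality argument applied symmetrically (any Riccati solution is SDP-feasible with a stabilizing gain, so two solutions dominate each other). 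You correctly flag this verification step as the crux, and your monotonicity argument for the sequence ($\mathcal{R}$ as a Loewner-minimum of affine monotone maps) is actually cleaner than the paper's bare assertion. Second, for claim \ref{the:nominal:second} your route genuinely differs: the paper simply defers to the complementary-slackness/duality argument of \cite{Balakrishnan2003}, whereas you give a self-contained proof --- Schur complement to reduce feasibility to $P \sleq \mathcal{R}(P)$, $P \sgeq 0$, then the comparison $P - P_\infty \sleq \mathcal{L}_{K_\infty}(P-P_\infty)$ iterated with $\mathcal{L}_{K_\infty}^k \to 0$ (by \emss{} of $K_\infty$) to conclude $P \sleq P_\infty$, so trace maximization selects $P_\infty$. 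This buys a proof that needs no SDP duality and makes explicit why stabilizability of $K_\infty$ is what certifies maximality; the paper's citation is shorter but leaves that mechanism hidden.
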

\begin{proof}
\archiv{See Appendix~\ref{app:nominal}.
}{
The proof of statement \ref{the:nominal:first} and \ref{the:nominal:third} follows that of \citep[Theorem 1]{Morozan1983} and the proof of \ref{the:nominal:second} follows from \cite{Balakrishnan2003} (details in \cite{Coppens2019}).
}
\end{proof}

Notice that the optimal solution to the LQR problem depends solely on the first and second moment of the random disturbance. This motivates our choice for the parametric form of the ambiguity set \eqref{eq:ambig2} used in the data-driven LQR problem, which we state in the next section.

\subsection{Data-driven stochastic LQR problem}
Consider now the case where $\probmul$ is not known a priori and only a finite set of offline \iid{} samples $\{\mulsdd\}_{i=0}^{M-1}$ is available. \dodd{For clarity, we add a hat to imply that a random variable depends on these samples.}
It is apparent that under such circumstances, it is only possible to solve \eqref{eq:lqr1} approximately. For most applications, however, it is crucial that the approximate solution remains stabilizing, which is not trivial. For instance, the \emph{empirical approach}, where \eqref{eq:lqr1} is solved using $\est{\mean} \dfn \frac{1}{\nsample} \ssum_{i=0}^{M-1} \mulsdd_i$ and $\est{\covar} \dfn \frac{1}{\nsample} \ssum_{i=0}^{M-1} (\mulsdd_i - \hat{\mu}) \trans{(\mulsdd_i - \hat{\mu})}$, does not guarantee stability. We will illustrate this with an example\archiv{}{\footnote{\label{fn:detailed}Detailed derivations in \cite{Coppens2019}.}}, which motivates the methodology presented in this paper.
\begin{example}[Motivating example]
   Consider the following scalar system,
\begin{equation*}
  \xs_{k+1} = (0.75 + \muls_k) \xs_k + \us_k,
\end{equation*}
where $\xs_k \in \Re$, $\us_k \in \Re$ and $\muls_k \in \Re$ are defined as before, but now $\muls$ is assumed Gaussian with $\Er{w}{\probmul} [\muls] = 0$ and $\Er{w}{\probmul} [\muls^2] = \sigma^2 = 0.5$. The empirical variance is $\est{\sigma}^2 = \tfrac{1}{\nsample} \ssum_{i=0}^{\nsample-1} \mulsdd_i^2$, using \iid{} samples $\{\mulsdd_i\}_{i=0}^{M-1}$. \archiv{We will develop an optimal LQR controller with a stage cost given by $q\xs_k^2 +  r \us_k^2 = \xs_k^2 +  10^4 \us_k^2$. Using the results from Proposition~\ref{the:nominal} to derive the Riccati equation, we obtain
\begin{equation*}
  q - p + (\est{\sigma}^2 + 0.75^2)p - (0.75 p)(r + p)^{-1}(0.75 p) = 0.
\end{equation*}
Note that this is a quadratic equation in $p$ with the following positive solution:
\begin{equation*}
  \dd{p}^* = \frac{\frac{q}{r} - 0.4375 + \est{\sigma}^2 + \sqrt{(\frac{q}{r}+  0.4375 - \est{\sigma}^2)^2 + 2.25 \frac{q}{r}}}{2(1-\est{\sigma}^2)}r \approx \frac{\est{\sigma}^2 - 0.4375}{1 - \est{\sigma}^2}r,
\end{equation*}
where we used $r \gg q$ and assumed that $\est{\sigma}^2 > 0.4375$ and $\est{\sigma}^2 < 1$. If the lower bound is not satisfied it turns out that the optimal feedback gain is given by $\dd{K}^* \approx 0$.
The optimal linear controller associated with this solution is given by:
\begin{equation} \label{eq:optimalg}
  \dd{K}^* = - \frac{0.75 \frac{\dd{p}^*}{r}}{1 + \frac{\dd{p}^*}{r}} \approx -\frac{\est{\sigma}^2 - 0.4375}{0.75}.
\end{equation} 
The closed-loop system given this controller is mean square stable \iff{} $(0.75 + \dd{K}^*)^2 + 0.5 ({\dd{K}^*})^2 < 1$, which follows from the recursive expression for the variance of the state in closed-loop: 
\begin{equation*}
  \Es{\muls}{\probmul}{\infty} \left[ (((0.75 + \dd{K}^*) + \dd{K}^* \muls) \xs_k)^2 \right] = ((0.75 + \dd{K}^*)^2 + \sigma^2 ({\dd{K}^*})^2) \Es{\muls}{\probmul}{\infty} \left[ \xs_k^2 \right].
\end{equation*}
Solving this stability condition for $\dd{K}^*$ leads to the conclusion that the system is mean square stable \iff{} $-1.4571 < \dd{K}^* < -0.0429$. Filling in \eqref{eq:optimalg} results in a condition on $\est{\sigma}^2$: $0.4697 < \est{\sigma}^2 < 1.5303$. Note that $\est{\sigma}^2 > 1$ implies that the system is not stabilizable (this is related to the uncertainty threshold principle of \cite{Athans1977}), so we only consider the lower bound on $\est{\sigma}^2$ (which is larger than $0.4375$ from our earlier assumption). In fact, we can now evaluate the probability that the empirical approach provides an unstable closed-loop controller as
\begin{equation*}
  \probmuls\left[\est{\sigma}^2 < 0.4697\right] = \probmuls\left[\ssum_{i=1}^{\nsample-1} \dd{\xi}_i^2 < \frac{0.4697\nsample}{\sigma^2} \right],
\end{equation*}
with $\dd{\xi}_i = \nicefrac{\mulsdd_i}{\sigma} \sim \mathcal{N}(0, 1)$ and which corresponds to the cumulative distribution function of a $\chi^2$-distributed random variable with $\nsample$ degrees of freedom, since we assumed that $w$ is Gaussian. Evaluating this probability numerically, we find that with $\nsample=500$, there is a probability of $0.1693$ that the empirical approach provides an unstable closed-loop controller.}{We develop an optimal LQR controller with a stage cost given by $q\xs_k^2 +  r \us_k^2 = \xs_k^2 +  10^4 \us_{k}^{2}$. Using results from Proposition~\ref{the:nominal}, one can derive that the optimal controller is approximately given by $\dd{K}^* \approx \nicefrac{(1 - 0.75^2 - \est{\sigma}^2)}{0.75}$ for $\est{\sigma}^2 > 0.4375$, otherwise $\dd{K}^* \approx 0$. We also assume that $\est{\sigma}^2 < 1$ (related to the uncertainty threshold principle of \cite{Athans1977}) since otherwise the problem is infeasible as the system is not stabilizable. We can show$^{\text{\ref{fn:detailed}}}$ that the controller $\dd{K}^*$ is \mss{} \iff{} $(0.75 + \dd{K}^*)^2 + 0.5 ({\dd{K}^*})^2 < 1$, which holds \iff{} $-1.4571 < \dd{K}^* < -0.0429$. This, in turn, is the case \iff{} $0.4697 < \est{\sigma}^2 < 1.5303$. Only the lower bound is critical since we assumed that $\est{\sigma}^2 < 1$. As $\est{\sigma}^2$ is a scaled $\chi^2$-distributed random variable with $\nsample$ degrees of freedom, the probability of $\est{\sigma}^2 < 0.4697$ occurring for $\nsample=500$ is $0.1693$. That is, the probability that the \emph{empirical approach} provides an unstable closed-loop controller in this example is almost 17\%.} 
\end{example}
From this example, it is clear that underestimation of the variance of the disturbance is directly related to the probability of failure of the controller. 
In order to take this into account, we introduce an arbitrarily-chosen confidence level $\beta \in [0, 1]$ and a corresponding ambiguity set $\amb: \Omega \rightrightarrows \mathcal{M}$, which represents the uncertainty of estimators $\est{\mean}$ and $\est{\covar}$. The size of $\mathcal{A}$ is then determined such that $\probmuls ( \probmul \in \amb ) \geq 1-\beta$.
In particular, we parametrize $\mathcal{A}$ as first suggested by \cite{Delage2010}:
\begin{equation} \label{eq:ambig2}
  \amb \dfn \left\{ \probmule \in \mathcal{M} \, \middle| \, \begin{array}{l}
    \trans{(\Ere{\mulse}{\probmule} \left[ \mulse \right] - \est{\mean})} \est{\covar}^{-1} (\Ere{\mulse}{\probmule} \left[ \mulse \right] - \est{\mean}) \leq \rmean \\
    \Ere{\mulse}{\probmule}\left[ (\mulse-\mean) \trans{(\mulse-\mean)} \right] \leq \rcovar \est{\covar},
  \end{array}\right\},
\end{equation}
The values of $\rmean$ and $\rcovar$ such that $\probmuls ( \probmul \in \amb ) \geq 1-\beta$ are derived in Section~\ref{sec:bounds}. In \Cref{sec:dr}, the following DR counterpart of \eqref{eq:lqr1} is solved
\begin{equation} \label{eq:lqr2}
  \begin{aligned}
    & \minimize_{\us_0, \us_1, \ldots} & & \underset{\probmule \in \amb}{\max}\, \Ese{\mulse}{\probmule}{\infty} \left[ \sum_{k=1}^{\infty} \trans{\xs_k} Q \xs_k + \trans{\us_k} R \us_k \right] \\
    & \stt & &\xs_{k+1} = A(\mulse_k)\xs_k+B(\mulse_k)\us_k, \quad k\in\N \\
    &&& \xs_0 = \xinits,
  \end{aligned}
\end{equation}
where $\{v_k\}_{k\in\N}$ is a trajectory of \iid{} copies of $v$. 
In doing so, we can finally establish \mss{} of the data-driven controller with high probability, by virtue of the following generalization of \Cref{the:lyapunov} to the DR case.

\begin{theorem}[DR Lyapunov stability] \label{the:drlyapunov}
  Consider the matrices $\{\hat{A}_i \}_{i=0}^\muld$, the set
  $\{w_k\}_{k\in\N}$ consisting of \iid{} copies of a square integrable random vector $w$ and the autonomous system $\xs_{k+1} = \dd{A}(\muls_k)\xs_k$, with $\dd{A}(\muls) = \ssum_{i=1}^{M} \dd{A}_i \muls^{(i)}$. Say that we have an ambiguity set with $\probmuls ( \probmul \in \amb ) \geq 1-\beta$, with $\probmul$ the true distribution of $w$. Then if there exists a $P \sgt 0$ such that:
  \begin{equation} \label{eq:drlyapunov}
    P - \max_{\probmule \in \amb} \Ere{\mulse}{\probmule} \left[ \trans{A(\mulse)}PA(\mulse) \right]\sgt 0,
  \end{equation} the autonomous system is \emss{} with probability at least $1-\beta$.
\end{theorem}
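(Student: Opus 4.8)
The plan is to deduce the statement from the non-robust characterization of \Cref{the:lyapunov} by combining two ingredients: the Lyapunov operator depends on the driving distribution only through its first two moments, and the worst-case matrix in \eqref{eq:drlyapunov} dominates (in the L\"owner order) the one generated by the true distribution whenever the latter lies in $\amb$.

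First I would make the moment reduction explicit. For a fixed $P \sgt 0$ the integrand $\trans{A(\mulse)}PA(\mulse)$ is quadratic in $\mulse$, so a direct computation gives
\[
  \Ere{\mulse}{\probmule}\left[\trans{A(\mulse)}PA(\mulse)\right] = \trans{\vec{A}_0}(\covare \otimes P)\vec{A}_0 = F(P),
\]
where $\covare$ is assembled from the mean and covariance of $\probmule$ exactly as in the definition preceding \eqref{eq:lqr1}. Hence the non-robust condition \eqref{eq:lyapunov} is precisely $P - \Ere{\muls}{\probmul}[\trans{A(\muls)}PA(\muls)] \sgt 0$, and \Cref{the:lyapunov} states that the system driven by $\probmul$ is \emss{} \iff{} some $P \sgt 0$ makes this expectation-based inequality strict.

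Next I would carry out the probabilistic bookkeeping and the order argument. By hypothesis the event $\{\probmul \in \amb\}$ has $\probmuls$-probability at least $1-\beta$. On this event $\probmul$ is feasible for the moment problem defining the maximum in \eqref{eq:drlyapunov}, so
\[
  \Ere{\muls}{\probmul}\left[\trans{A(\muls)}PA(\muls)\right] \sleq \max_{\probmule \in \amb}\Ere{\mulse}{\probmule}\left[\trans{A(\mulse)}PA(\mulse)\right].
\]
Using that $X \sleq Y$ implies $P - X \sgeq P - Y$ and chaining with the assumed strict inequality \eqref{eq:drlyapunov}, we obtain
\[
  P - \Ere{\muls}{\probmul}\left[\trans{A(\muls)}PA(\muls)\right] \sgeq P - \max_{\probmule \in \amb}\Ere{\mulse}{\probmule}\left[\trans{A(\mulse)}PA(\mulse)\right] \sgt 0.
\]
Thus $P$ satisfies \eqref{eq:lyapunov} for the true system, and the equivalence of \Cref{the:lyapunov} yields \emss{} on this event; since the event has probability at least $1-\beta$, the claim follows.

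The step I expect to require the most care is justifying that the matrix-valued $\max$ in \eqref{eq:drlyapunov} is well defined and genuinely upper bounds, in the L\"owner order, the expectation under every $\probmule \in \amb$. The clean way to see this is that $\Ere{\mulse}{\probmule}[\trans{A(\mulse)}PA(\mulse)] = \trans{\vec{A}_0}(\covare \otimes P)\vec{A}_0$ is L\"owner-monotone in the augmented moment matrix $\covare$ (because $\covare \mapsto \covare \otimes P$ is monotone for $P \sgeq 0$ and congruence by $\vec{A}_0$ preserves the order), so maximizing over the ellipsoidal mean/covariance constraints of \eqref{eq:ambig2} reduces to selecting the L\"owner-largest admissible moment matrix. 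Once this inner moment problem is resolved, making the $\max$ a single dominating matrix, the containment and monotonicity arguments above are routine.
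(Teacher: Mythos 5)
Your core argument is correct and coincides with the paper's own proof: there, too, \eqref{eq:drlyapunov} is read as implying $P - \Ere{\mulse}{\probmule}\left[\trans{A(\mulse)}PA(\mulse)\right] \sgt 0$ for every $\probmule \in \amb$, hence in particular for the true $\probmul$ on the event $\{\probmul \in \amb\}$, whose probability is at least $1-\beta$; \Cref{the:lyapunov} (whose condition \eqref{eq:lyapunov} equals the expectation form, by the moment computation you spell out) then gives \emss{} on that event.

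One correction to your closing paragraph, though it does not affect the validity of the proof. The claim that maximizing over \eqref{eq:ambig2} ``reduces to selecting the L\"owner-largest admissible moment matrix'' is false whenever $\rmean > 0$. Your monotonicity observations are fine ($\covare \mapsto \trans{\vec{A}_0}(\covare \otimes P)\vec{A}_0$ is indeed L\"owner-monotone), but the feasible set of moment matrices has no largest element: every admissible $\covare = \smallmat{1 & \trans{\mean} \\ \mean & \covar + \mean\trans{\mean}}$ has $(1,1)$ entry equal to $1$, so the difference of two admissible matrices with distinct means has a zero $(1,1)$ entry but a nonzero first row, and hence is never positive semidefinite; moment matrices corresponding to different admissible means are pairwise incomparable. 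This non-existence is precisely why the full-uncertainty case in \Cref{sec:mean} is only solved approximately via \cite[Theorem 6.2.1]{Ben-Tal2000}, rather than by plugging a single worst-case moment matrix into \Cref{the:nominal}. The fix is interpretive rather than mathematical: read \eqref{eq:drlyapunov} as the worst-case family of conditions, \ie{} $P$ strictly dominating $\Ere{\mulse}{\probmule}\left[\trans{A(\mulse)}PA(\mulse)\right]$ for every $\probmule \in \amb$ simultaneously (equivalently, domination of the supremum of the associated scalar quadratic forms, which always exists). Under that reading, instantiating at $\probmule = \probmul$ on the high-probability event is all your argument needs, exactly as in the paper.
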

\begin{proof}
\archiv{See Appendix~\ref{app:drlyapunov}.
}{Full proof in \cite{Coppens2019}.}
\end{proof}



\section{Data-driven ambiguity set estimation} \label{sec:bounds}
We now turn to the problem of estimating the parameters $\rcovar$ and $\rmean$ involved in the definition of the ambiguity set \eqref{eq:ambig2}, given that we have $\nsample$ i.i.d. draws from the true distribution.
These parameters will be estimated under the following assumption on the disturbances.

\begin{definition}[Sub-Gaussianity\refinenv{Wainwright2019}] \label{def:subG}
  A random variable $y$ is sub-Gaussian with variance proxy $\sigma^2$ if $\Er{y}{\prob} [y] = 0$ and its moment generating function satisfies
  \begin{equation} \label{eq:subG}
    \Er{y}{\prob} [\exp(\lambda y)] \leq \exp\left( \frac{\sigma^2 \lambda^2}{2} \right) \quad \forall \lambda \in \Re. 
  \end{equation}
  We denote this by $y \sim \subG(\sigma^2)$. We say that a random vector $\xi \in \Re^{\muld}$ is sub-Gaussian, or $\xi \sim \subG_\muld(\sigma^2)$, if $\trans{z} \xi \sim \subG(\sigma^2), \forall z \in \Re^\muld$ with $\nrm{z}_2 = 1$. 
\end{definition}

\begin{assumption} \label{assump:subG}
We assume that \begin{inlinelist}
\item $\muls$ is square integrable; 
\item $\{\muls_k\}_{k\in \N}$ are \iid{} copies of $\muls$;
\item $\Sigma \succ 0$; and
\item \label{assump:subG:d} $\Sigma^{\nicefrac{-1}{2}}(\muls_k - \mu) \sim \subG_{\muld}(\sigma^2)$ for some $\sigma \geq 1$. 
\end{inlinelist}
\end{assumption}
Note that in the specific case of Gaussian disturbances, Assumption~\ref{assump:subG}\ref{assump:subG:d} holds with $\sigma^2=1$, so no further prior knowledge on the distribution is required. Moreover, in this case the bound on the covariance obtained in \Cref{thm:covariance-bound} can be slightly improved \citep{Wainwright2019}. In the case of bounded disturbances, $\sigma^2$ can be estimated in a data-driven fashion \citep{Delage2010}.

For the moment, we restrict our attention to obtaining concentration inequalities for moment estimators of random vectors with zero mean and unit variance --- hereafter referred to as \emph{isotropic} random vectors.
We will then convert these results into ambiguity sets of the form \eqref{eq:ambig2} using arguments from~\cite{Delage2010,So2011}. We begin by specializing the isotropic covariance bound, derived with constants in \cite[Lemma A.1.]{Hsu2012} based on a result by \cite{Litvak2005} and the isotropic mean bound by \cite[Theorem 2.1]{Hsu2012b}.
\begin{theorem}[Isotropic covariance bound\refinenv{Hsu2012}] \label{thm:covariance-bound}
  Let $\xi \sim \subG_{\muld}(\sigma^2)$ be a random vector, with $\Er{\xi}{\prob} [\xi] = 0$, $\Er{\xi}{\prob} [\xi \trans{\xi}] = I_\muld$. Let $\{\dd{\xi}_i\}_{i=0}^{\nsample-1}$ be $\nsample$ independent copies of $\xi$ and $\Ihat \dfn \tfrac{1}{\nsample} \sum_{i=0}^{\nsample-1} \dd{\xi}_i \trans{\dd{\xi}_i}$. Then 
  \begin{equation*}
    \prob[\|\Ihat - I_{\muld}\|_{2} \leq \tcovar(\beta)] \geq 1 - \beta, 
  \end{equation*}
    with 
    \begin{equation} \label{eq:def-tcovar}
      \tcovar(\beta) \dfn \frac{\sigma^2}{1 - 2 \epsilon} \left( \sqrt{\frac{32q(\beta, \epsilon, \muld)}{\nsample}} + \frac{2q(\beta, \epsilon, \muld)}{\nsample} \right),
    \end{equation}
  where $\epsilon \in \left( 0, \, \nicefrac{1}{2} \right)$ is chosen freely and $q(\beta, \epsilon, \muld) \dfn \muld \log{(1 + \nicefrac{1}{\epsilon})} + \log{(\nicefrac{2}{\beta})}$.
\end{theorem}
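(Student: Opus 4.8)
The plan is to bound the spectral norm $\nrm{\Ihat - I_\muld}_2$ by combining an $\epsilon$-net discretization of the unit sphere with a one-dimensional sub-exponential concentration bound applied in each direction of the net. Since $\Ihat - I_\muld$ is symmetric, its spectral norm equals the largest absolute quadratic form over the sphere, $\nrm{\Ihat - I_\muld}_2 = \max_{\nrm{u}_2 = 1} |\trans{u}(\Ihat - I_\muld)u|$. Fixing the free parameter $\epsilon \in (0, \nicefrac{1}{2})$, I would pick an $\epsilon$-net $\net$ of the unit sphere in $\Re^\muld$ with $|\net| \le (1 + \nicefrac{1}{\epsilon})^\muld$; the standard approximation lemma for symmetric matrices then gives
\[
  \nrm{\Ihat - I_\muld}_2 \le \frac{1}{1 - 2\epsilon} \max_{u \in \net} |\trans{u}(\Ihat - I_\muld)u|.
\]
This already isolates the leading factor $\nicefrac{1}{(1-2\epsilon)}$ of $\tcovar(\beta)$ and reduces the problem to a finite maximum, whose size enters only through $\log|\net| \le \muld \log(1 + \nicefrac{1}{\epsilon})$.

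Next, I would fix a single $u \in \net$ and set $Z_i \dfn \trans{u}\dd{\xi}_i$. Because $\xi \sim \subG_\muld(\sigma^2)$ and $\nrm{u}_2 = 1$, Definition~\ref{def:subG} makes each $Z_i$ a mean-zero scalar variable in $\subG(\sigma^2)$ with $\E[Z_i^2] = \trans{u} I_\muld u = 1$, and the $Z_i$ are i.i.d. The diagonal entry of interest is then an average of centered squared sub-Gaussians,
\[
  \trans{u}(\Ihat - I_\muld)u = \frac{1}{\nsample} \sum_{i=0}^{\nsample-1} (Z_i^2 - 1),
\]
which are sub-exponential rather than sub-Gaussian. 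The crux is a Bernstein / sub-gamma estimate: bounding the moment generating function of $Z_i^2 - 1$ (the step supplied by \cite{Litvak2005} and \cite{Hsu2012}) shows the sum is sub-gamma with variance factor $16\sigma^4$ and scale $2\sigma^2$, so that for every $t > 0$,
\[
  \prob\left[ \left| \tfrac{1}{\nsample} \sum_{i=0}^{\nsample-1} (Z_i^2 - 1) \right| > \sigma^2 \left( \sqrt{\tfrac{32 t}{\nsample}} + \tfrac{2 t}{\nsample} \right) \right] \le 2 e^{-t}.
\]

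Finally I would take a union bound over $\net$. Setting $t = q(\beta, \epsilon, \muld) = \muld \log(1 + \nicefrac{1}{\epsilon}) + \log(\nicefrac{2}{\beta})$ makes the per-direction failure probability $2 e^{-t} = \beta\,(1 + \nicefrac{1}{\epsilon})^{-\muld} \le \beta / |\net|$, so the displayed bound holds simultaneously for all $u \in \net$ with probability at least $1 - \beta$. On that event the approximation lemma yields $\nrm{\Ihat - I_\muld}_2 \le \tfrac{\sigma^2}{1 - 2\epsilon}\bigl(\sqrt{32 q / \nsample} + 2q/\nsample\bigr) = \tcovar(\beta)$, which is the claim. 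I expect the main obstacle to be the per-direction step: since $Z_i^2$ is only sub-exponential, no sub-Gaussian tail is available, and one must derive the sub-gamma MGF bound with exactly the variance factor $16\sigma^4$ and scale $2\sigma^2$ needed to reproduce the constants $32$ and $2$ in $\tcovar(\beta)$ — this sharp-constant control of squared sub-Gaussians is precisely what the cited results do.
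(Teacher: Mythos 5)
Your overall route is the right one, and it is in fact the same route that underlies the paper's result: the paper gives no self-contained proof of \Cref{thm:covariance-bound} at all --- it imports the statement from \cite[Lemma A.1]{Hsu2012}, which builds on \cite{Litvak2005} --- and that lemma is proved exactly as you propose, by a net discretization of the sphere, a per-direction Bernstein (sub-gamma) bound for averages of centered squares of sub-Gaussians, and a union bound. Your per-direction constants are also the correct ones: a sub-gamma bound with variance factor $16\sigma^4$ and scale $2\sigma^2$ yields, via Chernoff, precisely the deviation $\sigma^2\bigl(\sqrt{32t/\nsample} + 2t/\nsample\bigr)$ with two-sided failure probability $2e^{-t}$, and $t = q(\beta,\epsilon,\muld)$ then closes the union bound. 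One subtlety you should make explicit in that step: since $\E[Z_i^2]=1$ while the sub-Gaussian proxy satisfies $\sigma^2 \geq 1$, you cannot obtain the sub-gamma bound by simply inserting $\lambda$ into $\E[e^{\lambda Z_i^2}] \leq (1-2\lambda\sigma^2)^{-\nicefrac{1}{2}}$; that route leaves a bias term $\lambda(\sigma^2-1)$ that does not vanish with $\nsample$. The moment generating function bound has to be derived from Bernstein-type moment estimates on the centered variable $Z_i^2-1$, which is what the cited references actually do.

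There is, however, one concrete gap: your covering-number claim. The deviation inequality $\nrm{\Ihat - I_\muld}_2 \leq (1-2\epsilon)^{-1}\max_{u\in\net}\abs{\trans{u}(\Ihat - I_\muld)u}$ requires $\net$ to have covering radius $\epsilon$, and for such a net the standard volumetric bound is $\abs{\net} \leq (1+\nicefrac{2}{\epsilon})^{\muld}$, not $(1+\nicefrac{1}{\epsilon})^{\muld}$. A net of cardinality $(1+\nicefrac{1}{\epsilon})^{\muld}$ is what the volumetric argument delivers for covering radius $2\epsilon$, but then the approximation factor degrades to $(1-4\epsilon)^{-1}$. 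So the two constants you pair --- the prefactor $(1-2\epsilon)^{-1}$ and the entropy term $\muld\log(1+\nicefrac{1}{\epsilon})$ --- cannot both be obtained simultaneously from the standard lemmas you invoke. Running your argument with the standard bounds proves the theorem with $q(\beta,\epsilon,\muld) = \muld\log(1+\nicefrac{2}{\epsilon}) + \log(\nicefrac{2}{\beta})$, which is strictly larger than the stated $q$. Since the entire purpose of this theorem is to provide explicit, practically usable constants (they feed directly into \eqref{eq:condition-sample-size} and the ambiguity radii), you should either supply a sharper sphere-covering estimate that genuinely gives $(1+\nicefrac{1}{\epsilon})^{\muld}$ points at covering radius $\epsilon$, or accept the weaker entropy term $\log(1+\nicefrac{2}{\epsilon})$; as written, the proof does not establish the bound \eqref{eq:def-tcovar} with the constants stated.
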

\begin{theorem}[Isotropic mean bound\refinenv{Hsu2012b}] \label{thm:mean-bound}
  Let $\{\dd{\xi}_i\}_{i=0}^{\nsample-1}$ be as defined in \Cref{thm:covariance-bound}, and 
  $\hat{\zeta} \dfn \tfrac{1}{\nsample} \sum_{i=0}^{\nsample-1} \dd{\xi}_i$. Then
  \begin{equation*}
    \prob[\nrm{\hat{\zeta}}_2^2 \leq \tmean(\beta)] \geq 1 - \beta.
  \end{equation*}
  where 
  \begin{equation*}
    \tmean(\beta) \dfn \frac{\sigma^2}{\nsample} p(\beta, \muld),
  \end{equation*}
  with $p(\beta, \muld) \dfn \left(\muld + 2\sqrt{\muld\log{(\nicefrac{1}{\beta})}} + 2 \log{(\nicefrac{1}{\beta})}\right)$. 
\end{theorem}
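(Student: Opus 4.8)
The plan is to reduce the statement to the tail inequality for quadratic forms of a single sub-Gaussian vector, \citep[Theorem 2.1]{Hsu2012b}, and then choose the deviation level so that the prefactors collapse to $\tmean(\beta)$. Recall that this inequality states: if a random vector $X$ satisfies $\Er{X}{\prob}[\exp(\trans\alpha X)] \le \exp(\sigma^2 \nrm{\alpha}_2^2/2)$ for all $\alpha$, then for any matrix $A$ with $\Sigma \dfn \trans{A}A$ and any $t>0$,
\[
  \prob\Big[\,\nrm{AX}_2^2 > \sigma^2\big(\tr{\Sigma} + 2\sqrt{\tr{\Sigma^2}\,t} + 2\nrm{\Sigma}_2\,t\big)\Big] \le e^{-t}.
\]
By Definition~\ref{def:subG}, a vector $\xi \sim \subG_\muld(\sigma^2)$ satisfies exactly this moment-generating-function bound (take $\alpha = \lambda z$ with $\nrm{z}_2 = 1$), so the hypothesis is available to us.

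First I would stack the samples into a single random vector $X \dfn \trans{[\,\trans{\dd{\xi}_0},\dots,\trans{\dd{\xi}_{\nsample-1}}\,]} \in \Re^{\nsample\muld}$. Because the $\dd{\xi}_i$ are independent and each satisfies the bound above, the moment generating function factorizes: for $\alpha = \trans{[\,\trans{\alpha_0},\dots,\trans{\alpha_{\nsample-1}}\,]}$,
\[
  \Er{X}{\prob}[\exp(\trans\alpha X)] = \prod_{i=0}^{\nsample-1}\Er{\dd{\xi}_i}{\prob}[\exp(\trans{\alpha_i}\dd{\xi}_i)] \le \prod_{i=0}^{\nsample-1}\exp\big(\tfrac{\sigma^2}{2}\nrm{\alpha_i}_2^2\big) = \exp\big(\tfrac{\sigma^2}{2}\nrm{\alpha}_2^2\big),
\]
so $X$ is $\sigma^2$-sub-Gaussian on $\Re^{\nsample\muld}$ with the \emph{same} variance proxy.

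Second, I would write $\hat\zeta$ as a linear image of $X$. Setting $A \dfn \tfrac{1}{\nsample}(\trans{\vec{1}_\nsample}\otimes I_\muld) \in \Re^{\muld\times\nsample\muld}$ gives $AX = \tfrac{1}{\nsample}\sum_i \dd{\xi}_i = \hat\zeta$, and by the mixed-product rule $\Sigma = \trans{A}A = \tfrac{1}{\nsample^2}(\vec{1}_\nsample\trans{\vec{1}_\nsample})\otimes I_\muld$. Since $\vec{1}_\nsample\trans{\vec{1}_\nsample}$ has a single nonzero eigenvalue equal to $\nsample$, the matrix $\Sigma$ has eigenvalue $\tfrac{1}{\nsample}$ with multiplicity $\muld$ and is zero otherwise, whence $\tr{\Sigma} = \muld/\nsample$, $\tr{\Sigma^2} = \muld/\nsample^2$ and $\nrm{\Sigma}_2 = 1/\nsample$. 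Applying the quadratic-form bound with $t = \log(\nicefrac{1}{\beta})$ (so that $e^{-t} = \beta$) and substituting these three values yields
\[
  \prob\Big[\nrm{\hat\zeta}_2^2 > \tfrac{\sigma^2}{\nsample}\big(\muld + 2\sqrt{\muld\log(\nicefrac{1}{\beta})} + 2\log(\nicefrac{1}{\beta})\big)\Big] \le \beta,
\]
which is precisely $\prob[\nrm{\hat\zeta}_2^2 > \tmean(\beta)] \le \beta$, i.e.\ the claim.

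The conceptual work is entirely in the reduction; the only points requiring care are that the variance proxy $\sigma^2$ is \emph{preserved} (not inflated by $\nsample$) under stacking, which hinges on independence through the identity $\sum_i \nrm{\alpha_i}_2^2 = \nrm{\alpha}_2^2$, and that the averaging map $A$ is arranged so that $\Sigma$'s spectrum is the simple one above. Once these are in place, matching the resulting expression to $\tmean(\beta)$ is purely algebraic, and the isotropy assumption $\Er{\xi}{\prob}[\xi\trans\xi]=I_\muld$ is in fact not needed beyond the zero-mean, sub-Gaussian structure.
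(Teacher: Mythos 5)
Your proposal is correct and takes essentially the same route as the paper, whose proof of this theorem is precisely a specialization of the quadratic-form tail bound \citep[Theorem 2.1]{Hsu2012b}: your stacking argument (tensorizing the sub-Gaussian moment-generating-function bound over independent samples, applying the bound to the averaging map with $\tr{\Sigma} = \nicefrac{\muld}{\nsample}$, $\tr{\Sigma^2} = \nicefrac{\muld}{\nsample^2}$, $\nrm{\Sigma}_2 = \nicefrac{1}{\nsample}$, and $t = \log(\nicefrac{1}{\beta})$) is exactly the instantiation that produces $\tmean(\beta)$. Your closing observation that isotropy of $\xi$ plays no role in this bound is also accurate.
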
 
By combining the bounds in \Cref{thm:covariance-bound,thm:mean-bound}, we readily obtain the following result.

\begin{theorem}[Ambiguity set] \label{thm:final-bounds}
Let $\muls \in \Re^{\muld}$ be a sub-Gaussian random vector, with $\Er{\muls}{\prob}[\muls] = \mean$, $\Er{\muls}{\prob}[(\muls-\mean)\trans{(\muls-\mean)}] = \covar$ and $\xi = \covar^{-\nicefrac{1}{2}} (\muls - \mean) \sim \subG{}_\muld(\sigma^2)$. Let $\{ \dd{\muls}_i\}_{i=0}^{\nsample-1}$ be independent copies of $\muls$. Let $\hat{\mean} \dfn \tfrac{1}{\nsample}\sum_{i=0}^{\nsample-1} \dd{\muls}_i$ and $\hat{\covar} \dfn \tfrac{1}{\nsample}\sum_{i=0}^{\nsample-1} (\dd{\muls}_i - \hat{\mean}) \trans{(\dd{\muls}_i-\hat{\mean})}$ denote the empirical estimators for the mean and 
the covariance matrix, respectively. 
Let $\epsilon$, $p(\beta, \muld)$, $q(\beta, \epsilon, \muld)$, $\tcovar(\nicefrac{\beta}{2})$ and $\tmean(\nicefrac{\beta}{2})$ be as defined in \Cref{thm:covariance-bound,thm:mean-bound}. Provided that
\begin{equation} \label{eq:condition-sample-size}
  \nsample > \left(\tfrac{\sigma^2 \sqrt{32q(\nicefrac{\beta}{2}, \epsilon, \muld)} + \sqrt{32 \sigma^4q(\nicefrac{\beta}{2}, \epsilon, \muld) + 8\sigma^2(1-2\epsilon)q(\nicefrac{\beta}{2}, \epsilon, \muld) + 4\sigma^2 (1-2\epsilon)^2 p(\nicefrac{\beta}{2}, \muld) }}{2(1 - 2\epsilon)}\right)^2,
\end{equation}
then with probability at least $1-\beta$,
\[ 
  \begin{aligned}
  \trans{(\hat{\mean} - \mean)}\hat{\covar}^{-1}(\hat{\mean} - \mean) \leq \rmean, \quad
  \covar \preceq \rcovar \hat{\covar}, 
  \end{aligned}
\]
with $\rcovar \dfn \tfrac{1}{1- \tmean(\nicefrac{\beta}{2}) - \tcovar(\nicefrac{\beta}{2})}$ and $\rmean \dfn \tfrac{\tmean(\nicefrac{\beta}{2})}{1 - \tmean(\nicefrac{\beta}{2}) - \tcovar(\nicefrac{\beta}{2})}$.

\end{theorem}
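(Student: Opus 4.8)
The plan is to reduce the statement to the isotropic setting of \Cref{thm:covariance-bound,thm:mean-bound} by whitening the samples, and then to translate the two resulting high-probability bounds into the claimed ellipsoidal guarantees through a single matrix identity. First I would introduce the whitened samples $\dd{\xi}_i \dfn \covarmhalf(\mulsdd_i - \mean)$, which are independent copies of the isotropic vector $\xi \sim \subG_\muld(\sigma^2)$ from the hypothesis, and set $\hat{\zeta} \dfn \tfrac{1}{\nsample}\ssum_{i=0}^{\nsample-1}\dd{\xi}_i$ and $\Ihat \dfn \tfrac{1}{\nsample}\ssum_{i=0}^{\nsample-1}\dd{\xi}_i\trans{\dd{\xi}_i}$, exactly the quantities governed by the two isotropic theorems. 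A direct computation then gives $\est{\mean} - \mean = \covarhalf \hat{\zeta}$ and, crucially, the centering identity
\begin{equation*}
  \covarmhalf \est{\covar}\, \covarmhalf = \tfrac{1}{\nsample}\ssum_{i=0}^{\nsample-1}(\dd{\xi}_i - \hat{\zeta})\trans{(\dd{\xi}_i - \hat{\zeta})} = \Ihat - \hat{\zeta}\trans{\hat{\zeta}},
\end{equation*}
which encodes the bias introduced by centering the empirical covariance at $\est{\mean}$ rather than at the true mean $\mean$.

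Next I would apply \Cref{thm:covariance-bound} and \Cref{thm:mean-bound}, each at confidence level $\nicefrac{\beta}{2}$, and take a union bound, so that with probability at least $1-\beta$ the events $\nrm{\Ihat - I_\muld}_2 \leq \tcovar(\nicefrac{\beta}{2})$ and $\nrm{\hat{\zeta}}_2^2 \leq \tmean(\nicefrac{\beta}{2})$ hold simultaneously. On this event the spectral-norm bound gives $\Ihat \sgeq (1 - \tcovar(\nicefrac{\beta}{2}))I_\muld$, while $\hat{\zeta}\trans{\hat{\zeta}} \sleq \nrm{\hat{\zeta}}_2^2 I_\muld$ (a rank-one matrix with largest eigenvalue $\nrm{\hat{\zeta}}_2^2$), so the centering identity yields $\covarmhalf\est{\covar}\,\covarmhalf \sgeq (1 - \tcovar(\nicefrac{\beta}{2}) - \tmean(\nicefrac{\beta}{2}))I_\muld$. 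The sample-size condition \eqref{eq:condition-sample-size} is precisely what makes the scalar $1 - \tcovar(\nicefrac{\beta}{2}) - \tmean(\nicefrac{\beta}{2})$ strictly positive: substituting $y = \sqrt{\nsample}$ turns the requirement $\tcovar(\nicefrac{\beta}{2}) + \tmean(\nicefrac{\beta}{2}) < 1$ into a quadratic inequality in $y$ with positive leading coefficient $(1-2\epsilon)$, whose larger root, after squaring, is exactly the right-hand side of \eqref{eq:condition-sample-size}. Rearranging $\covarmhalf\est{\covar}\,\covarmhalf \sgeq \rcovar^{-1} I_\muld$ by conjugation with $\covarhalf$ then gives $\covar \sleq \rcovar\est{\covar}$, the claimed covariance bound.

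Finally, for the mean bound I would rewrite the ellipsoidal distance using the whitening together with $\est{\mean}-\mean = \covarhalf\hat{\zeta}$ and the centering identity, obtaining $\trans{(\est{\mean}-\mean)}\est{\covar}^{-1}(\est{\mean}-\mean) = \trans{\hat{\zeta}}(\Ihat - \hat{\zeta}\trans{\hat{\zeta}})^{-1}\hat{\zeta}$. The lower bound $\Ihat - \hat{\zeta}\trans{\hat{\zeta}} \sgeq \rcovar^{-1}I_\muld$ established above inverts to $(\Ihat - \hat{\zeta}\trans{\hat{\zeta}})^{-1} \sleq \rcovar I_\muld$, so the distance is at most $\rcovar\nrm{\hat{\zeta}}_2^2 \leq \rcovar\,\tmean(\nicefrac{\beta}{2}) = \rmean$, which closes the argument. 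I expect the only genuinely delicate step to be the coupling between the mean and covariance errors through the $-\hat{\zeta}\trans{\hat{\zeta}}$ correction term: this is what forces both $\tcovar$ and $\tmean$ to appear in each of the two bounds and what produces the common denominator $1 - \tmean(\nicefrac{\beta}{2}) - \tcovar(\nicefrac{\beta}{2})$ in $\rcovar$ and $\rmean$. The remaining verification that \eqref{eq:condition-sample-size} is equivalent to positivity of that denominator is a routine quadratic-root computation.
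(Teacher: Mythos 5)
Your proposal is correct and follows essentially the same route as the paper's proof: whitening, a union bound over the two isotropic theorems at level $\nicefrac{\beta}{2}$, the rank-one centering correction (your $\Ihat - \hat{\zeta}\trans{\hat{\zeta}}$ is exactly the paper's identity $\tilde{\covar} = \hat{\covar} + (\hat{\mean}-\mean)\trans{(\hat{\mean}-\mean)}$ conjugated by $\covarmhalf$), and the same quadratic-in-$\sqrt{\nsample}$ argument for \eqref{eq:condition-sample-size}. The only difference is presentational --- you stay in whitened coordinates throughout, whereas the paper switches back to the anisotropic variables and invokes the Delage--Ye Cauchy--Schwarz step, which is equivalent to your eigenvalue bound on $\hat{\zeta}\trans{\hat{\zeta}}$.
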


\begin{proof}
  \archiv{
  Let us define $\xi = \covarmhalf (\muls- \mean) \sim \subG_{\muld}(\sigma^2)$ so that 
  \[ 
      \begin{aligned}
          \Er{\muls}{\prob}[\xi] &= \covarmhalf (\Er{\muls}{\prob}[\muls] - \mean) = 0 \\
          \Er{\muls}{\prob}[\xi\trans{\xi}] &= \covarmhalf \Er{\muls}{\prob}\left[(\muls - \mean) \trans{(\muls - \mean)}\right] \covarmhalf = \covarmhalf \covar \covarmhalf = I_\muld.
      \end{aligned}  
  \]
  Let $\hat{\zeta} = \tfrac{1}{\nsample} \sum_{i=0}^{\nsample-1} \dd{\xi}_i$ and $\Ihat = \tfrac{1}{\nsample} \sum_{i=0}^{\nsample-1} \dd{\xi}_i \trans{\dd{\xi}_i}$.
  By \Cref{thm:covariance-bound,thm:mean-bound}, we then have with 
  probability $1-\beta$ that,
  \begin{subequations} 
    \begin{align} \label{eq:isotropic-covar}
      \|\Ihat - I_\muld \|_2 &\leq \tcovar(\nicefrac{\beta}{2}) \\ 
      \text{and} \; \label{eq:isotropic-mean}
      \|\hat{\zeta}\|_2^2 &\leq \tmean(\nicefrac{\beta}{2}).
    \end{align}
  \end{subequations}
  Let us define the covariance estimator with respect to the true mean by
  \begin{equation} \label{eq:proof-isotropic-covar}
    \tilde{\covar} \dfn \tfrac{1}{\nsample} \ssum_{i=0}^{\nsample-1} (\dd{\muls}_i - \mean) \trans{(\dd{\muls}_i - \mean)} = \tfrac{1}{\nsample} \ssum_{i=0}^{\nsample-1} (\covarhalf \xi_i) \trans{(\covarhalf \xi_i)} = \covarhalf \Ihat \covarhalf.
  \end{equation}
   Therefore, we can rewrite \eqref{eq:isotropic-covar} in terms of the anisotropic random variable $\muls$ as 
  \begin{equation} \label{eq:proof-bound-covar}
    \begin{aligned}  
      \|\Ihat - I_\muld \|_2 \leq \tcovar(\nicefrac{\beta}{2}) &\Leftrightarrow (1-\tcovar(\nicefrac{\beta}{2})) I_\muld \preceq \Ihat \preceq (1+\tcovar(\nicefrac{\beta}{2})) I_\muld \\ 
      &\Rightarrow (1-\tcovar(\nicefrac{\beta}{2})) \covar \preceq \tilde{\covar} \preceq (1+\tcovar(\nicefrac{\beta}{2}))\covar, 
    \end{aligned}
    \end{equation}
    where we have used \eqref{eq:proof-isotropic-covar} and the fact that condition \eqref{eq:condition-sample-size} implies $\tcovar(\nicefrac{\beta}{2}) < 1$.
  As shown in \cite[Thm. 2]{Delage2010}, \eqref{eq:isotropic-mean} implies that for any $x \in \Re^\muld$,
  \begin{equation} \label{eq:proof-bound-mean}
  \begin{aligned}
    \trans{x} (\hat{\mean} -\mean)\trans{(\hat{\mean} - \mean)} x &= (\trans{x} (\hat{\mean} - \mean))^2 = (\trans{x} \smashoverbracket{\covarhalf \covarmhalf}{I_\muld} (\hat{\mean}-\mean))^2 \\
    &\leq \| \covarhalf x \|_2^2 \| \covarmhalf (\hat{\mean}-\mean) \|_2^2 \\
    &\leq \tmean(\nicefrac{\beta}{2}) \trans{x} \covar x.
  \end{aligned}
\end{equation}
  Using this fact, we can bound $\tilde{\covar}$ with respect to $\hat{\covar}$ and $\covar$ as 
  \[
      \begin{aligned}
      \tilde{\covar} &= \tfrac{1}{\nsample} \ssum_{i=0}^{\nsample-1} (\muls_i - \mean)\trans{(\muls_i - \mean)} \\ 
      &=\tfrac{1}{\nsample} \ssum_{i=0}^{\nsample-1} (\muls_i - \hat{\mean} + \hat{\mean} - \mean)\trans{(\muls_i - \hat{\mean} + \hat{\mean} - \mean)}\\
      &=\tfrac{1}{\nsample} \ssum_{i=0}^{\nsample-1} (\muls_i - \hat{\mean})\trans{(\muls_i - \hat{\mean})}
      + (\muls_i - \hat{\mean})\trans{(\hat{\mean} - \mean)}
      + (\hat{\mean} - \mean) \trans{(\muls_i - \hat{\mean})}
      + (\hat{\mean} - \mean)\trans{(\hat{\mean} - \mean)}\\
      &= \hat{\covar} + (\hat{\mean} - \mean) \trans{(\hat{\mean} - \mean)} \preceq \hat{\covar} + \tmean(\nicefrac{\beta}{2}) \covar,
      \end{aligned}
  \]
  where we used \eqref{eq:proof-bound-mean} in the final step. Combining this with \eqref{eq:proof-bound-covar}, we have that $(1-\tcovar)\covar \preceq \hat{\covar} + \tmean\covar$, thus
  \[ 
      \covar \leq \tfrac{1}{1- \tmean(\nicefrac{\beta}{2}) - \tcovar(\nicefrac{\beta}{2})} \hat{\covar},
  \]
  \[
    (1 - \tmean - \tcovar)\trans{(\hat{\mean} - \mean)} \hat{\covar}^{-1} (\hat{\mean}- \mean) \leq \trans{(\hat{\mean} - \mean)} \covar^{-1} (\hat{\mean} - \mean) = \trans{\xi} \xi \leq \tmean.
  \] 
  Condition~\ref{eq:condition-sample-size} then follows from assuming $1 - \tmean - \tcovar > 0$, which is a quadratic inequality in $\sqrt{\nsample}$.}{Apply the procedure of \cite[Thm. 2]{Delage2010} to the results of \Cref{thm:covariance-bound}--\ref{thm:mean-bound}.}
\end{proof} 


\section{Distributionally Robust LQR} \label{sec:dr} 
We will tackle the solution of \eqref{eq:lqr1} for the ambiguity set given in \eqref{eq:ambig2} in two stages. Firstly, we extend the result of Proposition~\ref{the:nominal} to the case where $\mean$ is known and $\covar$ is estimated, \ie{} $\rmean = 0$ and $\rcovar > 0$. Secondly, we present the result where both the mean and the covariance are estimated.

\subsection{Uncertain covariance}  \label{sec:covar}
The case where the mean is known is interesting since we can still formulate an exact solution to \eqref{eq:lqr2}. This will no longer be true for the full-uncertainty case (\Cref{sec:mean}).

  \begin{proposition} \label{the:ukvar}
    Consider that $\mulse \in \Re^\muld$ is distributed according to an element of the set 
    \begin{equation} \label{eq:ambigcovar}
      \amb_\covar \dfn \left\{ \probmule \in \mathcal{M} \, \mid \Ere{\mulse}{\probmule} \left[ (\mulse-\mean) \trans{(\mulse-\mean)} \right] \sleq \rcovar \hat{\covar}, \, \Ere{\mulse}{\probmule} \left[ \mulse \right] = \mean \right\},
    \end{equation}
    where $\probmuls( \probmul \in \amb_\covar ) \geq 1-\beta$. Then applying Proposition~\ref{the:nominal} with $\covar = \rcovar \est{\covar}$ results in the optimal linear controller for \eqref{eq:lqr2}, assuming that \eqref{eq:sys1} is DR mean square stabilizable, \ie{} there exists a $K$ such that the DR Lyapunov decrease \eqref{eq:drlyapunov} holds for the closed-loop system $x_{k+1} = (A(\muls_k) + B(\muls_k) K)\xs_k$. The optimal controller is also mean square stabilizing for \eqref{eq:sys1} with probability at least $1 - \beta$. 
  \end{proposition}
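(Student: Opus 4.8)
The plan is to exhibit a saddle point $(K^\star, \probmule^\star)$ for the minimax problem \eqref{eq:lqr2} over the ambiguity set $\amb_\covar$, where $K^\star$ is the controller returned by \Cref{the:nominal} with $\covar$ replaced by $\rcovar \est{\covar}$, and $\probmule^\star$ is any law with mean $\mean$ and covariance exactly $\rcovar \est{\covar}$ (for concreteness the Gaussian $\mathcal{N}(\mean, \rcovar \est{\covar})$, which clearly lies in $\amb_\covar$). Writing $J(u,\probmule) \dfn \Ese{\mulse}{\probmule}{\infty}[\sum_{k} \trans{\xs_k} Q \xs_k + \trans{\us_k} R \us_k]$ for the inner expected cost in \eqref{eq:lqr2}, it suffices to verify the two saddle-point inequalities: (a) for the fixed law $\probmule^\star$, the controller $K^\star$ minimizes $u \mapsto J(u,\probmule^\star)$ over all admissible control sequences; and (b) for the fixed controller $K^\star$, the law $\probmule^\star$ maximizes $\probmule \mapsto J(K^\star,\probmule)$ over $\amb_\covar$. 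Inequality (a) is immediate from statement~\ref{the:nominal:first} of \Cref{the:nominal}, since $J(\cdot,\probmule^\star)$ is exactly the nominal LQR cost \eqref{eq:lqr1} for a disturbance with covariance $\rcovar\est{\covar}$. Once both inequalities hold, the standard saddle-point argument gives $\min_u \max_{\probmule \in \amb_\covar} J = J(K^\star,\probmule^\star)$ with the outer minimum attained at $K^\star$, which is the first claim.

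The crux is inequality (b): within $\amb_\covar$ the worst case for the fixed linear policy $K^\star$ must be attained at the largest admissible covariance. First I would note that the closed-loop dynamics $\xs_{k+1} = (A(\mulse_k) + B(\mulse_k)K^\star)\xs_k$ are linear in $\mulse_k$, so the state second-moment recursion, and hence $J(K^\star,\probmule)$, depend on $\probmule$ only through its first two moments, equivalently through $\covare$. Since $\mean$ is fixed across $\amb_\covar$, the map $\covar \mapsto \covare$ is affine and order-preserving: if $\covar_1 \sleq \covar_2$ then $\covare_1 \sleq \covare_2$, as their difference is block-diagonal with a single positive-semidefinite block. The expected-cost operator $\mathcal{L}_{\covare}(\cdot) = \trans{(\vec{A}_0^{\mathrm{cl}})}(\covare \otimes \,\cdot\,)\vec{A}_0^{\mathrm{cl}}$, with $\vec{A}_0^{\mathrm{cl}}$ the closed-loop stacked matrices, is then a positive operator that is monotone in $\covare$ (Kronecker product with a fixed $P \sgeq 0$ preserves the semidefinite order), and the closed-loop cost matrix is the fixed point $P = \sum_{k\geq 0} \mathcal{L}_{\covare}^{k}(Q + \trans{K^\star}RK^\star)$. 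Termwise monotonicity in $\covare$ propagates to $P$, so $J(K^\star,\probmule) = \trans{\xinits} P \xinits$ is nondecreasing in the covariance of $\probmule$ and is maximized over $\amb_\covar$ at $\covar = \rcovar\est{\covar}$, i.e. at $\probmule^\star$. The one subtlety to dispatch is finiteness: $K^\star$ stabilizes the system at covariance $\rcovar\est{\covar}$ by statement~\ref{the:nominal:third} of \Cref{the:nominal}, and by the same monotonicity the Lyapunov operator remains a contraction for every $\covar \sleq \rcovar\est{\covar}$, so $P$ is finite throughout $\amb_\covar$.

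It remains to establish mean-square stability for the true system with probability at least $1-\beta$. By \Cref{thm:final-bounds}, with probability at least $1-\beta$ the true covariance satisfies $\covar \sleq \rcovar\est{\covar}$, so that $\probmul \in \amb_\covar$. The solution of \Cref{the:nominal} at covariance $\rcovar\est{\covar}$ furnishes a $P \sgt 0$ satisfying the closed-loop Lyapunov decrease \eqref{eq:lyapunov} at the corresponding $\covare$; invoking once more the monotonicity just established, this is precisely $P - \max_{\probmule \in \amb_\covar} \Ere{\mulse}{\probmule}[\trans{(A(\mulse)+B(\mulse)K^\star)}P(A(\mulse)+B(\mulse)K^\star)] \sgt 0$, i.e. the DR Lyapunov condition \eqref{eq:drlyapunov} holds on $\amb_\covar$ for the closed loop. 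Applying \Cref{the:drlyapunov} then yields \emss{} with probability at least $1-\beta$. I expect the monotonicity step of the previous paragraph to be the main obstacle, since it is what simultaneously reduces the semi-infinite maximization to a single nominal instance and transfers the nominal stability certificate to the distributionally robust one.
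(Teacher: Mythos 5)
Your proof is correct, and it takes a genuinely different route from the paper's. The paper argues through dynamic programming: it forms the Bellman operator of \eqref{eq:lqr2} and shows, by a support-function argument (the inner problem reduces to $\max_{\covar \sleq \rcovar\est{\covar}} \tr{(X\covar)}$ with $X = \trans{Z}PZ \sgeq 0$, which is maximized at $\covar = \rcovar\est{\covar}$), that the DR Bellman operator coincides with the nominal Bellman operator at covariance $\rcovar\est{\covar}$; the entire proof of \Cref{the:nominal} (value-iteration convergence, telescoping, uniqueness of the Riccati solution, the \sdp{} reformulation) then carries over verbatim, and stability follows from \Cref{the:drlyapunov} exactly as in your last paragraph. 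You instead verify a saddle point of the infinite-horizon cost functional directly: inequality (a) uses \Cref{the:nominal} as a black box, and inequality (b) rests on a monotone-operator lemma, namely that $P = \sum_{k\geq 0}\mathcal{L}_{\covare}^{k}(Q+\trans{K}RK)$ is termwise nondecreasing in $\covare$. Both arguments ultimately hinge on the same elementary fact --- expectations of positive semidefinite quadratic forms are monotone in the covariance --- but the paper applies it stagewise inside the recursion, whereas you apply it once to the whole closed-loop series. Your route buys modularity (no need to redo DP for the DR problem), an explicit worst-case distribution $\mathcal{N}(\mean,\rcovar\est{\covar})$, and the min--max/max--min exchange as a byproduct; the paper's route buys the stronger structural statement that DR value iteration \emph{is} nominal value iteration at the worst-case covariance, which is what lets it inherit uniqueness and the \sdp{} statement of \Cref{the:nominal} wholesale. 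One point you should make explicit: invoking \Cref{the:nominal} at covariance $\rcovar\est{\covar}$ requires nominal mean-square stabilizability at that covariance; this does follow from the assumed DR stabilizability, because the Gaussian law lies in $\amb_\covar$, so the DR Lyapunov decrease \eqref{eq:drlyapunov} implies the nominal condition \eqref{eq:lyapunov} for that covariance.
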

  \begin{proof}
    \archiv{See Appendix~\ref{app:ukvar}.
    }{The Bellman operator associated with \eqref{eq:lqr2} is of the same form as the one in the proof of Proposition~\ref{the:nominal}, which makes it applicable (full proof is in \citep{Coppens2019}).}
  \end{proof}

\subsection{Full uncertainty} \label{sec:mean}
We finally consider the more general case using the full ambiguity set $\amb$ given by \eqref{eq:ambig2}. The general \emph{min-max problem} \eqref{eq:lqr2} for such sets is computationally intractable, which is why an upper bound on the quadratic cost is minimized instead by employing results from robust control \citep{Boyd1994, Kothare1996}. A common approach is to assume that the value function can be written in the quadratic form $V(\xs) = \trans{\xs}P\xs$ for some $P \sgt 0$ and to solve the optimization problem
\begin{equation} \label{eq:quadraticbound}
  \begin{aligned}
    & \minimize_{V(\xs)} & & \Er{\xinitsr}{\probxinit} \left[V(\xinitsr)\right] \\
    & \stt & & \Val(\xs) \geq \min_{\us} \left\{ \trans{\xs}Q\xs + \trans{\us}R\us + \max_{\probmule \in \amb} \Ere{\mulse}{\probmule} \left[ \Val(A(\mulse)x + B(\mulse)u) \right]\right\}, \quad \forall \xs,
  \end{aligned}
\end{equation}
where we introduced the random initial state $\xinitsr \in \Re^\xd$. The optimal cost of \eqref{eq:quadraticbound} then upper bounds the true LQR cost of \eqref{eq:lqr2} for a given value of $\xinits$ as proven in \cite{Kothare1996} for a similar setup. We can then write \eqref{eq:quadraticbound} as an \sdp{} using the following theorem.
\begin{theorem} \label{the:ukmean}
Let $\amb$ be an ambiguity set of the form \eqref{eq:ambig2}. Then we can find an approximate solution of \eqref{eq:quadraticbound} for the system \eqref{eq:sys1}, assuming that the initial state is given by the random vector $\xinitsr \in \Re^\xd$ with $\Er{\xinitsr}{\probxinit} \left[ \xinitsr \right] = 0$ and $\Er{\xinitsr}{\probxinit} \left[ \xinitsr \trans{\xinitsr} \right] = I_{\xd}$, by solving the following \sdp.
  \begin{subequations} \label{eq:lmiukm2}
      \begin{alignat}{4}
      & \maximize_{W, V, S, L} \quad & & \tr{W} \nonumber \\
      & \stt & & \smallmat{
          S & \rmean \trans{H_1} & \rmean \trans{H_2} & \cdots & \rmean \trans{H_\muld} \\
          \rmean H_1 & L &  &  &  \\
          \rmean H_2 &  & L &  & \\
          \vdots &  &  & \ddots & \\
          \rmean H_\muld &  &  &  & L
        } \sgeq 0, \label{seq:mlmi2} \\
      & & & \smallmat{
        W - \sqrt{2} S & \trans{(\vec{A}W + \vec{B}V)} & \trans{(\hat{A}W + \hat{B}V)}  & \trans{W}Q^{\frac{1}{2}} & \trans{V} R^{\frac{1}{2}} \\
        \vec{A}W + \vec{B}V & \covardr^{-1} \otimes W & & & \\
        \hat{A}W + \hat{B}V & & W - \sqrt{2} L & & \\
        Q^{\frac{1}{2}}W & & & I_{\xd} & \\
        R^{\frac{1}{2}}V & & & & I_{\ud}
      } \sgeq 0 \label{seq:mlmi3},
    \end{alignat}
\end{subequations}
where $H_i = \ssum_{j=1}^\muld [\hat{\covar}^{\nicefrac{1}{2}}]_{ji} (A_j W + B_j V)$, $\hat{A} = A(\hat{\mean})$, $\hat{B} = B(\hat{\mean})$, $\covardr = \rcovar \hat{\covar}$. Let $\dd{W}$ and $\dd{V}$ denote the minimizers of \eqref{eq:lmiukm2}. The corresponding linear controller $\us = \dd{K} \xs$ with $\dd{K} = \dd{V}\dd{W}^{-1}$ then achieves an upper bound of the cost \eqref{eq:quadraticbound}, given by $\Er{\xinitsr}{\probxinit} [ \trans{\xinitsr}\dd{P}\xinitsr]$, where $\dd{P} = \dd{W}^{-1}$. Moreover, $\dd{K}$ is mean-square stabilizing for \eqref{eq:sys1} with probability at least $1 - \beta$.
\end{theorem}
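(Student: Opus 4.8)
The plan is to show that any feasible point of the \sdp{} \eqref{eq:lmiukm2} yields a quadratic value function satisfying the Bellman inequality of \eqref{eq:quadraticbound}, and then to read mean-square stability off that same inequality. First I would fix the quadratic ansatz $\Val(\xs) = \trans{\xs}\dd P\xs$ with $\dd P \sgt 0$ (as in \eqref{eq:quadraticbound}) and restrict the policy to a linear feedback $\us = K\xs$. Since $\min_\us\{\cdots\} \le \{\cdots\}\big|_{\us = K\xs}$, enforcing the constraint of \eqref{eq:quadraticbound} at $\us = K\xs$ is sufficient (it only strengthens the requirement). With $\Phi(\mulse) \dfn A(\mulse) + B(\mulse)K$ the constraint reads, for all $\xs$,
\[
  \trans{\xs}\dd P\xs \ge \trans{\xs}(Q + \trans{K}RK)\xs + \max_{\probmule\in\amb}\E\left[\trans{\xs}\trans{\Phi(\mulse)}\dd P\Phi(\mulse)\xs\right].
\]
As this must hold for every $\xs$ and the inner maximum ranges over the moment set $\amb$, it is equivalent to the Loewner inequality $\dd P \sgeq Q + \trans{K}RK + \Theta$, where $\Theta$ denotes the supremum over $\probmule\in\amb$ of $\E[\trans{\Phi(\mulse)}\dd P\Phi(\mulse)]$. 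The objective becomes $\Er{\xinitsr}{\probxinit}[\Val(\xinitsr)] = \tr{\dd P}$ by $\Er{\xinitsr}{\probxinit}[\xinitsr\trans{\xinitsr}] = I_{\xd}$; maximizing $\tr{W}$ with $W = \dd P^{-1}$ is the standard linear surrogate for keeping this bound small, and is one reason the solution is only \emph{approximate}.

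The main obstacle is computing $\Theta$. Expanding $\Phi(\mulse) = \Phi(m) + \ssum_i(\mulse^{(i)} - m_i)\Phi_i$ around $m \dfn \E[\mulse]$ (with $\Phi_i \dfn A_i + B_iK$) gives the clean identity $\E[\trans{\Phi(\mulse)}\dd P\Phi(\mulse)] = \trans{\Phi(m)}\dd P\Phi(m) + \ssum_{ij}C_{ij}\trans{\Phi_i}\dd P\Phi_j$, with $C \dfn \E[(\mulse - m)\trans{(\mulse - m)}]$. The covariance part is monotone in $C$ (in the Loewner order) and is therefore maximized at the boundary $C = \covardr = \rcovar\est{\covar}$; this is the contribution that becomes the $\covardr^{-1}\otimes W$ block after a Schur complement. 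The nominal part $\trans{\Phi(m)}\dd P\Phi(m)$ must be maximized over the mean ellipsoid $\trans{(m - \est{\mean})}\est{\covar}^{-1}(m - \est{\mean}) \le \rmean$; writing $\Phi(m) = \Phi(\est{\mean}) + \ssum_i(m_i - \est{\mean}^{(i)})\Phi_i$ and applying $\trans{(a+b)}\dd P(a+b) \sleq 2\trans{a}\dd P a + 2\trans{b}\dd P b$ splits it into a mean-independent curvature $2\trans{\Phi(\est{\mean})}\dd P\Phi(\est{\mean})$ and an ellipsoidal quadratic maximization $2\max_{\trans{d}\est{\covar}^{-1}d \le \rmean}\trans{(\ssum_i d_i\Phi_i)}\dd P(\ssum_i d_i\Phi_i)$. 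This factor-two relaxation is the origin both of the word \emph{approximate} and of the $\sqrt{2}$ coefficients; I would bound the ellipsoidal term by the Schur/S-lemma certificate encoded in \eqref{seq:mlmi2}, where $H_i = \ssum_j[\est{\covar}^{\nicefrac{1}{2}}]_{ji}(A_jW + B_jV)$ collects the whitened directions and $S, L$ act as slacks.

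Collecting the cost term, the covariance term at $\covardr$, and the two mean terms, I would linearize through the congruence transform by $W = \dd P^{-1}$, the substitution $V = KW$, and repeated Schur complements. The cost blocks $\trans{W}Q^{\nicefrac{1}{2}}$ and $\trans{V}R^{\nicefrac{1}{2}}$ produce the two identity blocks; the stacked fluctuation $\vec{A}W + \vec{B}V$ pairs with $\covardr^{-1}\otimes W$; the nominal term $\est{A}W + \est{B}V = \Phi(\est{\mean})W$ pairs with the $W - \sqrt{2}L$ block; and the certified mean-deviation slack enters the $(1,1)$ block as $W - \sqrt{2}S$, with \eqref{seq:mlmi2} furnishing $S$. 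Reversing these Schur complements shows that feasibility of \eqref{eq:lmiukm2} implies $\dd P \sgeq Q + \trans{\dd K}R\dd K + \Theta$ for $\dd P = \dd W^{-1}$ and $\dd K = \dd V\dd W^{-1}$, which is exactly the (strengthened) Bellman inequality; hence $\Val(\xs) = \trans{\xs}\dd P\xs$ is feasible for \eqref{eq:quadraticbound} and $\Er{\xinitsr}{\probxinit}[\trans{\xinitsr}\dd P\xinitsr] = \tr{\dd P}$ upper-bounds its cost.

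Finally, for stability I would discard the strictly positive cost term: the certified inequality gives $\dd P - \max_{\probmule\in\amb}\E[\trans{\Phi(\mulse)}\dd P\Phi(\mulse)] = \dd P - \Theta \sgeq Q + \trans{\dd K}R\dd K \sgt 0$ (using $Q \sgt 0$), which is precisely the DR Lyapunov decrease \eqref{eq:drlyapunov} for the closed loop $\xs_{k+1} = (A(\mulse_k) + B(\mulse_k)\dd K)\xs_k$. Note that the factor-two split over-estimates the true worst case, so the relaxation is conservative in the direction required for stability rather than against it. \Cref{the:drlyapunov} then certifies that the closed loop is \emss{}, and hence \mss, with probability at least $1 - \beta$. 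I expect this worst-case-expectation step to be the crux: the ellipsoidal mean uncertainty has no exact LMI reformulation, so the factor-two split together with the coupled slacks $S$ and $L$ must be chosen so that the resulting Schur complements reproduce exactly the block pattern of \eqref{eq:lmiukm2} while keeping every inequality pointed the right way.
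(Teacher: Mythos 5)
Your overall architecture (restrict to a linear feedback, certify the Bellman inequality of \eqref{eq:quadraticbound} through a chain of sufficient LMI conditions, then read off stability via \Cref{the:drlyapunov}) matches the paper, as do your covariance step (worst case attained at $\covardr = \rcovar\est{\covar}$) and your final stability argument. The gap is in the mean-uncertainty step, and it is not cosmetic: the relaxation you propose does not generate the SDP \eqref{eq:lmiukm2} that the theorem is actually about. Your split $\trans{(a+b)}P(a+b) \sleq 2\trans{a}Pa + 2\trans{b}Pb$ doubles both the nominal term and the deviation term; after congruence by $W$ and Schur complements this produces blocks of the form $\sqrt{2}(\hat AW + \hat BV)$ paired with $W$ and (after bounding the ball maximum by a sum) blocks proportional to $H_i$ paired with $W$, and it requires no slack variables at all. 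By contrast, \eqref{seq:mlmi3} keeps $\hat AW + \hat BV$ unscaled and instead deflates the diagonal blocks to $W - \sqrt{2}S$ and $W - \sqrt{2}L$, with \eqref{seq:mlmi2} tying $S$ to the $H_i$ through $L$. These patterns are not congruent rearrangements of one another: eliminating the slacks from \eqref{eq:lmiukm2} (taking $S$ at its Schur-optimal value $\rmean^2\ssum_i \trans{H_i}L^{-1}H_i$) leaves a free matrix parameter $L$; the particular scalar choice $L \propto W$ that reproduces your doubled nominal term yields a deviation penalty with a different coefficient than yours, and a paper-feasible point with small $L$ has a nominal coefficient strictly below your $2$ and so can violate your LMI. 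Hence feasibility of \eqref{eq:lmiukm2} neither implies nor is implied by feasibility of your LMI, and the step you yourself flag as the crux --- ``reversing these Schur complements'' to recover exactly \eqref{seq:mlmi2}--\eqref{seq:mlmi3} --- cannot be carried out. As written, your argument proves the theorem for a different SDP.

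The paper's actual mechanism keeps the mean perturbation as a bilinear cross term rather than splitting a square. After whitening $\zeta = \est{\covar}^{-\nicefrac{1}{2}}(\mean - \est{\mean})$, the robust constraint takes the form $U_0 + \ssum_i \zeta_i U_i \sgeq 0$ for all $\nrm{\zeta}_2 \leq \rmean$, where each $U_i$ contributes only the cross term $2\zeta_i\trans{\xi_3}H_i\xi_1$ to $\trans{\xi}(U_0 + \ssum_i\zeta_i U_i)\xi$. A modified version of \cite[Theorem 6.2.1]{Ben-Tal2000} is then applied: Cauchy--Schwarz over the ball bounds the cross term below by $-2\nrm{\xi_L}_2\nrm{\xi_S}_2$ with $\xi_S = S^{\nicefrac{1}{2}}\xi_1$ and $\xi_L = L^{\nicefrac{1}{2}}\xi_3$, which is legitimate exactly when \eqref{seq:mlmi2} holds, and then $2\nrm{\xi_L}_2\nrm{\xi_S}_2 \leq \sqrt{2}(\trans{\xi_L}\xi_L + \trans{\xi_S}\xi_S)$ shows \eqref{seq:mlmi3} is sufficient. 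This is the true origin of the $\sqrt{2}$'s and of the slacks $S$, $L$, and it is also the step that loosens $\rmean$ by at most $\sqrt{\muld}$ as noted in \Cref{rem:trace}; your factor-two split misattributes both. To repair the proof, either redo the mean step with this cross-term/slack argument so that the sufficient conditions you derive are literally \eqref{seq:mlmi2}--\eqref{seq:mlmi3}, or accept that your derivation establishes the analogous result only for the (different) SDP your relaxation produces.
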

\begin{proof}
\archiv{
  See Appendix~\ref{app:ukmean}.
  } {The proof specializes \cite[Theorem 6.2.1]{Ben-Tal2000} (see \cite{Coppens2019}).}
\end{proof}

\archiv{
\begin{remark} \label{rem:trace}
  Two approximations are made in Theorem~\ref{the:ukmean}. First, leveraging \cite[Theorem 6.2.1]{Ben-Tal2000}, introducing an approximation error quantified by an increase of $\rmean$ by a factor of at most $\sqrt{\muld}$. Secondly, we minimize an upper-bound of the LQR cost instead of the cost itself. We can further decrease the closed-loop cost by instead using a receding horizon controller for a given $\xinits$, which too can be formulated as an SDP. Then \eqref{eq:quadraticbound} is reformulated as \citep{Kothare1996}:
  \begin{subequations}
    \begin{alignat}{4}
    & \minimize_{W, V, S, L, \gamma} \quad & & \gamma \nonumber \\
    & \stt & & \trans{\xinits} W^{-1} \xinits \leq \gamma \label{seq:mlmi4}  \\
    & & & \eqref{seq:mlmi2}, \eqref{seq:mlmi3}, \nonumber
  \end{alignat}
  \end{subequations}
  where \eqref{seq:mlmi4} can be replaced by a linear matrix inequality  using Schur's complement \citep[Sec. 2.1]{Boyd1994}. The assumption used in Theorem~\ref{the:ukmean} ensures that the solution converges to the optimal one as $\rmean$ and $\rcovar$ go to zero \citep{Balakrishnan2003}.
\end{remark}
}{
  \begin{remark} \label{rem:trace}
    Two approximations are made in Theorem~\ref{the:ukmean}. First, \cite[Theorem 6.2.1]{Ben-Tal2000}, loosens $\rmean$ by a factor of at most $\sqrt{\muld}$. Secondly, we minimize an upper-bound of the LQR cost instead of the cost itself. We can further decrease the closed-loop cost by using a receding horizon controller for a given $\xinits$, which too can be formulated as an SDP \citep{Kothare1996, Coppens2019}. By contrast, the assumption in Theorem~\ref{the:ukmean} ensures that the solution converges to the nominal one as $\rmean$ and $\rcovar$ go to zero \citep{Balakrishnan2003}.
  \end{remark}  
}
\begin{remark}
  Invertibility of $\rcovar \hat{\covar}$ can be guaranteed by using $\rcovar \hat{\covar} + \lambda I$ instead of $\rcovar \hat{\covar}$ for some small $\lambda$, which increasing the size of the ambiguity set, introducing additional conservatism.
\end{remark}
\section{Numerical Experiment} \label{sec:numerical}
We experimentally quantify the sample complexity of our approach, \ie{} how many samples are needed before the controller becomes equivalent to the nominal one based on the true $\covar$ and $\mean$ instead of their data-driven estimates. Consider the double integrator model with matrices:
\begin{equation*}
  A_0 = \mat{1 & T_s \\ 0 & 1-0.4T_s}, \, B_0 = \mat{0 \\ T_s}, \, A_1 = \mat{0 & 0 \\ 0 & -T_s}, \, A_2 = \mat{0 & 0 \\ 0 & 0}, \, B_1 = \mat{0 \\ 0}, \, B_2 = \mat{0 \\ T_s},
\end{equation*}
where we chose $T_s = 0.02$. The dynamics are then given by \eqref{eq:sys1} with $\muls_k$ an independent random sequence of Gaussian random vectors with covariance $\covar = \smallmat{1 & 0 \\ 0 & 1}$ and mean $\mean = \smallmat{0 \\ 0}$. 

To estimate the sample complexity, we determine controllers satisfying \eqref{eq:lqr2} for $Q = \smallmat{10 & 0 \\ 0 & 1}$ and $R = 0.01$. The parameters of the ambiguity set \eqref{eq:ambig2} are determined using Theorem~\ref{thm:final-bounds} with $\beta = 0.05$ and $\epsilon = \nicefrac{1}{30}$. Since $\muls_k$ are Gaussian, $\xi_k = \covar^{-\nicefrac{1}{2}} (w_k - \mu)$ are sub-Gaussian with $\sigma^2 = 1$. 

The simulation setup is as follows. We compare the nominal controller, the uncertain covariance controller (Proposition~\ref{the:ukvar}) and the full uncertainty controller (Theorem~\ref{the:ukmean}). We evaluate the expected closed-loop cost for $\xinits = \trans{\smallmat{2& 2}}$ by solving the Lyapunov equation. We start with $\nsample = 1000$ to satisfy \eqref{eq:condition-sample-size}. For each value of $\nsample$ we produce $30$ realisations of both DR controllers. \Cref{fig:sample-complexity} depicts confidence intervals for relative difference between closed-loop cost of the DR controllers and the nominal controller (\ie{} the relative suboptimality). The figure shows that both converge with a rate $\mathcal{O}(1/\nsample)$ to the nominal cost even though \Cref{the:ukmean} only solves \eqref{eq:lqr2} approximately.

\begin{figure}[t]
  \begin{minipage}[t]{0.5\textwidth}
    \vspace*{0pt}
    \input{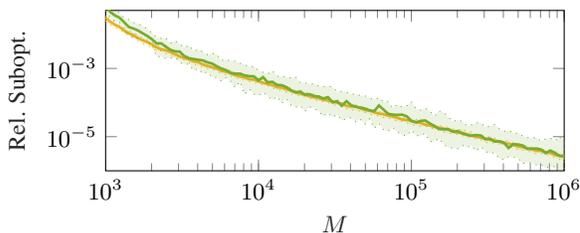}
  \end{minipage}%
  \hfill
  \begin{minipage}[t]{0.45\textwidth}
    \vspace*{0pt}
    \captionsetup{indention=0pt, labelfont=bf}
    \caption{
      Relative suboptimality versus sample size for controllers. The colored area depicts the $0.3$-confidence interval around the cost. The full line depicts the mean.} \label{fig:sample-complexity}
  \end{minipage}
  \hfill
  \vspace{-0.3cm}
  \end{figure}

\section{Conclusion and future work}
We studied the infinite horizon LQR problem for systems with multiplicative uncertainty on both the states and the inputs.
We operate in the setting where the distributions are estimated from data.
We show that using results from high-dimensional statistics, high-confidence ambiguity sets can be constructed, which allow us to formulate a DR counterpart to the stochastic optimal control problem as an SDP.
As a result, stability of the closed-loop system can be guaranteed with high probability.

In future work, we aim to perform an in-depth analysis of the conservatism introduced by the proposed formulations. Furthermore, we plan to study extensions towards DR Kalman filtering. 



\acks{This work was supported by the Ford KU Leuven Research Alliance; the Fonds Wetenschappelijk Onderzoek PhD grant 11E5520N and research projects G0A0920N, G086518N and G086318N; Research Council KU Leuven C1 project No. C14/18/068; Fonds de la Recherche Scientifique – FNRS and the Fonds Wetenschappelijk Onderzoek – Vlaanderen under EOS project no 30468160 (SeLMA).}

\bibliography{bibliography.bib}

\archiv{
\appendix

\section{Proofs for nominal case}
\subsection{Proof of Theorem~\ref{the:lyapunov}} \label{app:lyapunov}
  \citep[Lemma 1]{Morozan1983} states that the following is a necessary and sufficient condition for \mss:
  \begin{equation*}
    P - \Er{\muls}{\probmul} \left[ \trans{A(\muls)} P A(\muls) \right] \sgt 0.
  \end{equation*}
  The expected value is of the form $\Er{\muls}{\probmul} \left[ \trans{v} P v \right]$, with $v$ a random vector. We will leverage the following, easily verified result:
  \begin{equation} \label{eq:quadexp}
    \Er{\muls}{\probmul} \left[ \trans{v} P v \right] = \tr{\left(P \Er{\muls}{\probmul} \left[ \tilde{v}\trans{\tilde{v}} \right]\right)} + \trans{\Er{\muls}{\probmul} \left[ v \right]} P \Er{\muls}{\probmul} \left[ v \right],
  \end{equation}
  with $\tilde{v} = v - \Er{\muls}{\probmul} \left[ v \right]$. Taking $v = A(\muls)\xs$ for a given $\xs$ results in:
  \begin{equation*}
    \begin{split}
      \Er{\muls}{\probmul} \left[ v \right] &= A(\mean)\xs, \\
      \Er{\muls}{\probmul} \left[ \tilde{v} \trans{\tilde{v}} \right] &= \Er{\muls}{\probmul} \left[ (\trans{\tilde{\muls}} \otimes I_\xd) \vec{A} \xs \trans{\xs} \trans{\vec{A}} (\tilde{\muls} \otimes I_\xd) \right] = \ssum_{i, j=1}^\muld \left( \covar_{ij} A_i \xs \trans{\xs} \trans{A_j} \right),
    \end{split}
  \end{equation*}
  with $\tilde{\muls} = \muls - \mean$. Applying \eqref{eq:quadexp} then gives:
  \begin{equation*}
    \begin{split}
      \E_{\probmul} \left[ \trans{\xs} \trans{A(\muls)} P A(\muls) \xs \right] &= \tr{\left(P \ssum_{i, j=1}^\muld \left( \covar_{ij} A_i \xs \trans{\xs} \trans{A_j} \right) \right)} + \trans{\xs} \trans{A(\mean)} P A(\mean) \xs \\
      &= \tr{\left( \textstyle \ssum_{i, j=1}^\muld\left( \covar_{ij} \trans{\xs} \trans{A_j}  P  A_i \xs  \right) \right)} + \trans{\xs} \trans{A(\mean)} P A(\mean) \xs \\
      &= \trans{\xs} \trans{\vec{A}} (\covar \otimes P) \vec{A} \xs + \trans{\xs} \trans{A(\mean)} P A(\mean) \xs \\
      &= \trans{\xs} \trans{\vec{A}_0} (\covare \otimes P) \vec{A}_0 \xs.
    \end{split}
  \end{equation*}
  By \cite[Lemma 1]{Morozan1983} it follows that \mss{} is equivalent to \emss{} for system \eqref{eq:sys1}. \hfill \QED

\subsection{Proof of Proposition~\ref{the:nominal}} \label{app:nominal}
  We will follow the proof of \citep[Theorem 1]{Morozan1983}. We look at properties of the Bellman operator associated with \eqref{eq:lqr1}:
  \begin{equation} \label{eq:bellman}
      (T \Val_k)(\xs) = \min_{\us} \left\{ \trans{\xs}Q\xs + \trans{\us}R\us + \Er{\muls}{\probmul} \left[ \Val_k(A(\muls)\xs + B(\muls)\us) \right]\right\},
  \end{equation}
  Assuming a quadratic value function of the form $V_k(x) = V(x) = \trans{\xs}P\xs, \forall k \in \N$, we can for all $\xs \in \Re^\xd$, write the $k+1$'th step of value iteration with $P_0 = 0$ as
  \begin{subequations}
    \begin{alignat}{3}
      \trans{x} P_{k+1} x &= \min_{\us} \left\{\trans{\xs} Q \xs + \trans{\us}R\us + \trans{(\vec{A}_0 \xs + \vec{B}_0\us)} (\covare \otimes P_k) (\vec{A}_0 \xs + \vec{B}_0\us) \right\} \nonumber \\
      &= \trans{\xs} (Q + \trans{K_k}RK_k) \xs + \trans{\xs}\trans{(\vec{A}_0 + \vec{B}_0 K_k)} (\covare \otimes P_k) (\vec{A}_0 + \vec{B}_0 K_k)\xs  \label{eq:bellman3} \\
      \Leftrightarrow P_{k+1} &= Q + F(P_k) - \trans{H(P_k)}(R + G(P_k))^{-1}H(P_k) \label{eq:bellman4},
    \end{alignat}
  \end{subequations}
  where the first equality follows from the same arguments as those in the proof of Theorem~\ref{the:lyapunov}. The second equality follows from the fact that the optimal value of $\us$ in \eqref{eq:bellman} is given by $\us = K_k \xs$ with $K_k = -(R + G(P_k))^{-1}H(P_k)$.

  Statement \ref{the:nominal:first} follows from the following statements, which we will prove separately:
  \begin{inlinelist}
    \item value iteration converges to some $P_\infty$; \label{eq:nominal:first:sub1}
    \item Given an initial state $\xs_0$, $\trans{\xs_0} P_\infty \xs_0$ is the optimal cost of \eqref{eq:lqr1}, realized by the state feedback law $\us = K_\infty \xs$; and \label{eq:nominal:first:sub2}
    \item $P_\infty$ is the unique solution to \eqref{eq:nominal_riccati}. \label{eq:nominal:first:sub3}
  \end{inlinelist}

  Since there exists a stabilizing controller $\us = K \xs$, the optimal cost of \eqref{eq:lqr1} is bounded above. Indeed, we can write the closed-loop stage cost at time $k$ equivalently as
  \begin{equation}\label{eq:cl-stage-cost}
    \Es{\muls}{\probmul}{\infty} \left[ \trans{\xs_k} (Q + \trans{K} R K) \xs_k \right] = \tr \left( (Q + \trans{K} R K) \Es{\muls}{\probmul}{\infty} \left[ \xs_k \trans{\xs_k}  \right] \right)
  \end{equation}
  and by definition of \emss{}, $\Es{\muls}{\probmul}{k} \left[ \xs_k \trans{\xs_k} \right] \leq c\gamma^k \nrm{\xs_0}$, $\forall k \in \N$, with $c >0$ and $\gamma \in (0, 1)$, which implies that \eqref{eq:cl-stage-cost} is summable. Furthermore, positive definiteness of $Q$ and $R$ guarantees that $\{P_k\}_{k\in \posN}$ is a monotone increasing sequence (\ie{} $P_{k+1} \sgeq P_{k}$, $\forall k \in \posN$) of positive definite matrices if $P_0 = 0$. Therefore the sequence $\{P_k\}_{k \in \N}$ converges to some $P_{\infty} \sgt 0$, proving \ref{eq:nominal:first:sub1}. 

  Next we prove that $K_\infty$ is the optimal controller. To do so, we write the Riccati equation for states $x_k$ and $x_{k+1}$ at subsequent time steps as (see \eqref{eq:bellman3}):
  \begin{equation} \label{eq:telescope}
    \begin{split}
      \trans{\xs_k} P_\infty \xs_k &= \trans{\xs_k} (Q + \trans{K_\infty} R K_\infty)\xs_k + \Er{\muls_0}{\probmuls} \left[\trans{\xs_{k+1}}P_\infty \xs_{k+1} \mid \xs_k \right] \\
      \trans{\xs_{k+1}} P_\infty \xs_{k+1} &= \trans{\xs_{k+1}} (Q + \trans{K_\infty} R K_\infty)\xs_{k+1} + \Er{\muls_1}{\probmuls} \left[\trans{\xs_{k+2}}P_\infty \xs_{k+2} \mid \xs_{k+1} \right],
    \end{split}
  \end{equation}
  where $\xs_{k+1} = A(\muls_k)\xs_k + B(\muls_k)K_\infty \xs_k$. Taking the expected value of both sides of the equalities in \eqref{eq:telescope}, noting that $\Es{\muls}{\probmuls}{N} \left[\Er{\muls_0}{\probmuls} \left[\trans{x_{k+1}}P_\infty x_{k+1} \mid x_k \right]\right] = \Es{\muls}{\probmuls}{N} \left[ \trans{x_{k+1}} P_\infty \xs_{k+1}\right]$ and summing the equalities for $k=0, \dots, \partsim-1$ allows us to write:
  \begin{equation} \label{eq:bndsvaluefct}
    \begin{split}
      \trans{\xs_0}P_{\partsim{}}\xs_0 &\leq \Es{\muls}{\probmul}{\partsim} \ssum_{k=0}^{\partsim{}-1}  \left[ \trans{\xs_k} (Q + \trans{K_{\infty}} R K_{\infty}) \xs_k \right] = \trans{\xs_0} P_{\infty} \xs_0 - \Es{\muls}{\probmul}{\infty} \left[ \trans{x_{\partsim{}}} P_{\infty} \xs_{\partsim{}} \right]\\
      &\leq \trans{\xs_0} P_{\infty} \xs_0, \quad \forall \partsim{} \in \N, \forall \xs_0 \in \Re^\xd,
    \end{split}
  \end{equation}
  with $x_{\partsim{}}$ produced by running \eqref{eq:sys1} for $\us_k = K_{\infty}\xs_k$ for $\partsim{}$ time steps, starting from some $\xs_0$. The first inequality in \eqref{eq:bndsvaluefct} follows from dynamic programming, since we know that for the finite horizon case $P_{\partsim{}}$ describes the optimal cost and the final inequality follows from $P_{\infty} \sgt 0$. Taking the limit of $\partsim{} \rightarrow \infty$ we have that $\trans{\xs_0} P_{\infty} \xs_0$ is the closed-loop cost for $ \us_k = K_{\infty}\xs_k$ and that $P_{\infty}$ and $K_{\infty}$ describe the optimal cost and the optimal controller respectively, proving \ref{eq:nominal:first:sub2}. 

  Notice that \eqref{eq:bellman3} implies the Lyapunov condition \eqref{eq:lyapunov} since $R \sgt 0$ and $Q \sgt 0$. This holds for any solution of the Riccati equation \eqref{eq:nominal_riccati}, proving \ref{the:nominal:third}. 

  Note that by \eqref{eq:bellman4}, $P_{\infty}$ satisfies the Riccati equation \eqref{eq:nominal_riccati}. Since \eqref{eq:bndsvaluefct} holds for any $\tilde{P}_\infty$ that satisfies \eqref{eq:nominal_riccati} we can see that $P_\infty \sleq \tilde{P}_\infty$ when we take $\partsim{} \rightarrow \infty$, since $\Es{\muls}{\probmul}{\partsim}[ \tilde{\xs}_\partsim \tilde{P}_\infty \xs_\partsim ] \rightarrow 0$ by definition of \mss{}. We use \eqref{eq:bellman3} to write:
  \begin{equation} \label{eq:telescope2}
    \begin{split}
      \trans{\xs_k} \tilde{P}_\infty \xs_k &\leq \trans{\xs_k} (Q + \trans{K_k} R K_k)\xs_k + \Er{\muls_0}{\probmuls} \left[\trans{\xs_{k+1}}\tilde{P}_\infty \xs_{k+1} \mid \xs_k \right],
    \end{split}
  \end{equation}
  where $\xs_{k+1} = A(\muls_k)\xs_k + B(\muls_k)K_k \xs_k$. The inequalities follow from the fact that we did not use the optimal controller $\tilde{K}_\infty = -(R + G(\tilde{P}_\infty))^{-1} H(\tilde{P}_\infty)$, instead we use $K_k$, which denote the finite horizon optimal controllers. Then summing \eqref{eq:telescope2} similarly to what we did for \eqref{eq:telescope} results in:
  \begin{equation} \label{eq:bndsvaluefct2}
    \trans{\xs_0}\tilde{P}_\infty \xs_0 \leq \trans{\xs_0} P_\partsim \xs_0 + \Es{\muls}{\probmul}{\infty} \left[ \tilde{\xs}_\partsim \tilde{P}_\infty \xs_\partsim \right]
  \end{equation}
  where $\tilde{\xs}_{k+1} = A(\muls_k) + B(\muls_k)\tilde{K}_\infty$. Then taking $\partsim \rightarrow \infty$ in \eqref{eq:bndsvaluefct2} implies $\Es{\muls}{\probmul}{\partsim}[ \tilde{\xs}_\partsim \tilde{P}_\infty \xs_\partsim ] \rightarrow 0$ as before. So $\tilde{P}_\infty \sleq {P}_\infty$ and we know $P_\infty \sleq \tilde{P}_\infty$ from earlier. Therefore ${P}_\infty = \tilde{P}_\infty$, proving \ref{eq:nominal:first:sub3}.

  We can use the arguments by \cite{Balakrishnan2003} to show that the solution \eqref{eq:nominal_riccati} is obtained by solving \eqref{eq:sdp}, proving \ref{the:nominal:second}. This follows from the complementary slack condition.

  \hfill \QED


\section{Proofs for distributionally robust case}
\subsection{Proof of Theorem~\ref{the:drlyapunov}} \label{app:drlyapunov}
  The Lyapunov inequality \eqref{eq:drlyapunov} directly implies
  \begin{equation}
    P - \Ere{\mulse}{\probmule} \left[ \trans{\hat{A}(\mulse)}P\hat{A}(\mulse) \right]\sgt 0, \quad \forall \probmule \in \amb.
  \end{equation}
  Since $\probmuls ( \probmul \in \amb ) \geq 1-\beta$ we have that \eqref{eq:lyapunov} holds with probability at least $1-\beta$, proving the required result.  \hfill \QED 

\subsection{Proof of Proposition~\ref{the:ukvar}} \label{app:ukvar}
  The Bellman operator associated with \eqref{eq:lqr2} is given by:
    \begin{equation} \label{eq:bellman2}
      (T\Val_{k+1})(\xs) = \min_{\us} \left\{ \trans{\xs}Q\xs + \trans{\us}R\us + \max_{\probmule \in \amb_\covar} \Ere{\mulse}{\probmule} \left[ \Val_k(A(\mulse)x + B(\mulse)u) \right]\right\}.
    \end{equation}
    We can evaluate the expectation as in the proof of Theorem~\ref{the:lyapunov}, resulting in a similar statement to \eqref{eq:bellman3}. After extracting the terms that are independent of $\covar$ from the maximum and grouping the remaining ones together, only the following needs to be evaluated:
    \begin{equation*}
      \max_{\covar \sleq \rcovar\hat{\covar}} \trans{(\vec{A}x + \vec{B}u)} (\covar \otimes P) {(\vec{A}x + \vec{B}u)}.
    \end{equation*}
    This is equivalent to:
    \begin{equation*}
      \max_{\covar \sleq \rcovar\hat{\covar}} \tr \left( 
        \smallmat{
          \trans{z}_{1} P z_{1} & \ldots & \trans{z}_{1} P z_{\muld} \\
          \vdots & \ddots & \vdots \\
          \trans{z}_{\muld} P z_{1} & \ldots & \trans{z}_{\muld} P z_{\muld} 
        }\smallmat{
          \covar_{11} & \ldots & \covar_{1\muld} \\
          \vdots & \ddots & \vdots \\
          \covar_{\muld1} & \ldots & \covar_{\muld\muld}
        }
      \right) = \max_{\covar \sleq \rcovar\hat{\covar}} \tr \left(X \covar\right),
    \end{equation*}
    where $z_i = A_ix + B_iu$. Since $X = \trans{Z} P Z \sgt 0$, with $Z = \begin{bmatrix} z_1 & \ldots & z_\muld \end{bmatrix}$. The maximum corresponds to a support function for which one can easily verify using the optimality conditions that:
    \begin{equation*}
      \max_{\covar \sleq \rcovar\hat{\covar}} \tr \left(X \covar\right) = \tr \left(\rcovar X\hat{\covar}\right) = \rcovar \trans{(\vec{A}x + \vec{B}u)} (\hat{\covar} \otimes P) \trans{(\vec{A}x + \vec{B}u)}.
    \end{equation*}
    As such the Bellman operator is still of a similar form to \eqref{eq:bellman3}. Therefore the remaining arguments from the proof of Proposition~\ref{the:nominal} are all applicable using $\rcovar \hat{\covar}$ instead of $\covar$. More specifically, we know that the cost is bounded above since there exists a stabilizing $K$ for $\rcovar \hat{\Sigma}$ by assumption. This means that value iteration will converge to some $P_{\infty}$. We can once again telescope the Riccati equation and prove that this $P_\infty$ and $K_\infty$ are optimal and the unique solution to the Riccati equation. The arguments of \citep{Balakrishnan2003} are still applicable as well. 

    We still need to prove that the resulting controller stabilizes \eqref{eq:sys1} for the true $\covar$ with probability at least $1-\beta$. For this consider the Riccati equation \eqref{eq:nominal_riccati} which is equivalent to:
    \begin{equation*}
      \begin{split}
        P &- \max_{\covar \sleq \rcovar\hat{\covar}} \trans{(\vec{A}_0 + \vec{B}_0\dd{K}_\infty)}(\covare \otimes P)(\vec{A}_0 + \vec{B}_0\dd{K}_\infty)
        = Q + \trans{K_\infty}RK_\infty \sgt 0,
      \end{split}
    \end{equation*}
    Due to Theorem~\ref{the:drlyapunov}, $\dd{K}_\infty$ then stabilizes \eqref{eq:sys1} in the mean square sense with probability at least $1-\beta$. 
    
    \hfill \QED

\subsection{Proof of Theorem~\ref{the:ukmean}} \label{app:ukmean}
We need to prove the following statements:
\begin{inlinelist}
  \item \label{item:constraint} constraints \eqref{seq:mlmi2} and \eqref{seq:mlmi3} are equivalent to the constraint in \eqref{eq:quadraticbound},
  \item \label{item:upperboundcost} the solution of \eqref{eq:quadraticbound} upper bounds the true optimal cost of \eqref{eq:lqr2}
  \item \label{item:quadraticbound} the cost of \eqref{eq:lmiukm2} is equivalent to that of \eqref{eq:quadraticbound}
  \item \label{item:mss} the resulting controller is mean square stabilizing for \eqref{eq:sys1}.  
\end{inlinelist}

Using the quadratic parametrization of $V(\xs)$ and applying the same tricks as in the proof of Proposition~\ref{the:ukvar} we can rewrite the constraint of \eqref{eq:quadraticbound} as follows:
\begin{equation*}
  \begin{split}
    P - Q - \trans{K}RK &- \trans{(\vec{A} + \vec{B}K)} (\covardr \otimes P) (\vec{A} + \vec{B}K) \\
    &- \trans{(A(\mean) + B(\mean)K)} P (A(\mean + B(\mean)K) \sgeq 0, \quad \forall \mean \in \mathcal{D},
  \end{split}
\end{equation*}
where we used $\us = K \xs$ and define $\mathcal{D} \dfn \left\{ \trans{(\mean - \hat{\mean})} \hat{\covar}^{-1} (\mean - \hat{\mean}) \leq \rmean^2 \right\}$. Pre- and post-multiplying by $W = P^{-1}$ and setting $V = KW$ results in:
\begin{equation*}
  \begin{split}
    W - \trans{W} Q W - \trans{V} R V &- \trans{(\vec{A}W + \vec{B}V)}(\covardr \otimes W)(\vec{A}W + \vec{B}V) \\
    &- \trans{(A(\mean)W + B(\mean)V)}W^{-1}(A(\mean)W + B(\mean)V), \quad \forall \mean \in \mathcal{D}.
  \end{split}
\end{equation*} 
Applying Schur's complement lemma \citep[Sec. 2.1]{Boyd1994} then gives the following:
\begin{equation} \label{eq:lmiukm1}
  \smallmat{
    W & \star & \star & \star & \star \\
    \vec{A}W + \vec{B}V & \covardr^{-1} \otimes W & & & \\
    (\hat{A}W + \hat{B}V) + \ssum_{i=1}^\muld \mean_i F_i & & W & & \\
    Q^{\frac{1}{2}}W & & & I_\xd & \\
    R^{\frac{1}{2}}V & & & & I_\ud 
  } \sgeq 0, \quad \forall \tilde{\mean} \in \tilde{\mathcal{D}},
\end{equation}
where $\tilde{\mathcal{D}} \dfn \left\{ \trans{\tilde{\mean}} \hat{\covar}^{-1} \tilde{\mean} \leq \rmean^2 \right\}$ and $F_i = A_iW + B_iV$. Defining $\zeta = \hat{\covar}^{\nicefrac{-1}{2}} \tilde{\mean}$ allows us to write $\ssum_{i=1}^\muld \mean_i F_i = \ssum_{i=1}^\muld \zeta_i H_i$. Note that, after using this equality, \eqref{eq:lmiukm1} is of the form 
\begin{equation*}
  U_0 + \ssum_{i=1}^\muld \delta_i U_i \sgeq 0, \quad  \forall \delta \in \left\{ \delta \mid \nrm{\delta}_2 \leq \rho \right\}.
\end{equation*}
Hence we can apply a slightly modified version of \cite[Theorem 6.2.1]{Ben-Tal2000} where we exploit the sparsity of $U_i$, which results in conditions \eqref{seq:mlmi2} and \eqref{seq:mlmi3}. More specifically we have:
\begingroup
\allowdisplaybreaks
\begin{subequations}
  \begin{alignat}{10}
    \trans{\xi}(U_0 + \ssum_{i=1}^\muld \delta_i U_i) \xi &= \trans{\xi} U_0 \xi + 2 \ssum_{i=1}^\muld \zeta_i \trans{\xi}_3  H_i \xi_1 \nonumber \\ 
    & \qquad {\scriptstyle \left( \xi =\trans{\left[\begin{smallmatrix} \trans{\xi}_1 & \trans{\xi}_2 &\trans{\xi}_3 &\trans{\xi}_4 &\trans{\xi}_5 \end{smallmatrix}\right]} \right)} \nonumber \\
    & = \trans{\xi} U_0 \xi + 2 \trans{\xi_L} \left[ \ssum_{i=1}^\muld \zeta_i L^{\nicefrac{-1}{2}} H_i S^{\nicefrac{-1}{2}} \xi_S \right] \nonumber \\
    & \qquad {\scriptstyle \left( \xi_S = S^{\nicefrac{1}{2}} \xi_1, \, \xi_L = L^{\nicefrac{1}{2}} \xi_3 \right)} \nonumber \\
    & \geq \trans{\xi} U_0 \xi - 2 \nrm{\xi_L}_2 \left[ \ssum_{i=1}^\muld \abs{\zeta_i} \nrm{ L^{\nicefrac{-1}{2}} H_i S^{\nicefrac{-1}{2}} \xi_S}_2 \right]  \nonumber \\
    & \geq \trans{\xi} U_0 \xi - 2 \nrm{\xi_L}_2 \sqrt{ \ssum_{i=1}^\muld \rmean^2 \nrm{ L^{\nicefrac{-1}{2}} H_i S^{\nicefrac{-1}{2}} \xi_S}_2^2 } \label{eq:btderiv1} \\
    & = \trans{\xi} U_0 \xi - 2 \nrm{\xi_L}_2 \sqrt{ \rmean^2 \trans{\xi}_S \left[ \ssum_{i=1}^\muld S^{\nicefrac{-1}{2}} \trans{H}_i L^{-1} H_i S^{\nicefrac{-1}{2} }\right] \xi_S }  \nonumber \\
    & \geq \trans{\xi} U_0 \xi - 2 \nrm{\xi_L}_2 \nrm{\xi_S}_2 \label{eq:btderiv2},
  \end{alignat}
\end{subequations}
\endgroup
where we used $\nrm{\zeta} \leq \rmean$ for \eqref{eq:btderiv1} and \eqref{eq:btderiv2} holds when:
\begin{equation*}
  \ssum_{i=1}^\muld S^{\nicefrac{-1}{2}} (\rmean \trans{H}_i) L^{-1} (H_i \rmean) S^{\nicefrac{-1}{2} } \leq I_{n_w},
\end{equation*}
which after pre- and post-multiplying by $S^{\nicefrac{1}{2}}$ and applying Schur's complement lemma \citep[Sec. 2.1]{Boyd1994} is shown to be equivalent to \eqref{seq:mlmi2}. The result then holds when:
\begin{equation} \label{eq:fcond}
  \begin{split}
    \trans{\xi}U_0 \xi &\geq 2 \sqrt{(\trans{\xi}_L \xi_L) (\trans{\xi}_S \xi_S)},
  \end{split}
\end{equation}
for which \eqref{seq:mlmi3} is a sufficient condition since:
\begin{equation*}
  \begin{split}
    \sqrt{2} (\trans{\xi}_L \xi_L +  \trans{\xi}_S \xi_S) &\geq 2 \sqrt{(\trans{\xi}_L \xi_L) (\trans{\xi}_S \xi_S)} \\
    2 \trans{\xi}_L \xi_L + 4(\trans{\xi}_L \xi_L) (\trans{\xi}_S \xi_S) + 2 \trans{\xi}_S \xi_S &\geq 4(\trans{\xi}_L \xi_L) (\trans{\xi}_S \xi_S).
  \end{split}
\end{equation*}
So therefore \eqref{seq:mlmi3}, which can be written as $\trans{\xi}U_0 \xi \geq \sqrt{2} (\trans{\xi}_L \xi_L +  \trans{\xi}_S \xi_S)$, implies \eqref{eq:fcond}, proving \ref{item:constraint}.

Next we will prove that the result of \eqref{eq:quadraticbound} upper bounds the true cost of \eqref{eq:lqr2}. We introduce $\ell(x) = \trans{x}(Q + \trans{K}RK)x$ and $A_K(\muls) = A(\muls) + B(\muls)K$. Then we can use the constraint in \eqref{eq:quadraticbound} to write
\begin{equation*}
  \begin{split}
    \trans{\xinitsr} P \xinits &\geq \ell(\xinitsr) + \max_{\probmule \in \amb} \Ere{\mulse}{\probmule} \left[ \trans{\xinitsr} \trans{A_K(\mulse_0)} P A_K(\mulse_0) \xinitsr \right], \\
    & \geq \max_{\probmule \in \amb} \Ese{\mulse}{\probmule}{2} \left[ \ell(\xinitsr) + \ell( A_K(\mulse_0) \xinitsr) + \trans{\xinitsr} \trans{A_K(\mulse_0)} \trans{A_K(\mulse_1)} P A_K(\mulse_1) A_K(\mulse_0) \xinitsr\right]
  \end{split}
\end{equation*}
which can be recursively applied to show that 
\begin{equation} \label{eq:upperboundcost}
  \trans{\xinitsr} P \xinitsr \geq \max_{\probmule \in \amb} \Ese{\mulse}{\probmule}{\infty} \left[\ssum_{k=0}^{\infty} \trans{\xs_k}(Q + \trans{K}RK)\xs_k\right].
\end{equation}
\sloppy We can then take the expectation with respect to $\xinitsr$ of both sides. Noting that $\Er{\xinitsr}{\probxinit} \left[ \trans{\xinitsr} P \xinitsr \right] = \tr(P \Er{\xinitsr}{\probxinit} \left[ \xinitsr \trans{\xinitsr} \right]) = \tr(PI_{n_x})$ explains why the trace of $W^{-1}$ is maximized in \eqref{eq:lmiukm2}, proving \ref{item:upperboundcost}. Finally consider the constraint of \eqref{eq:quadraticbound}, which is a DR Lyapunov decrease condition. As such we can use Theorem~\ref{the:drlyapunov}, proving \ref{item:quadraticbound}. \hfill \QED
}{}
\end{document}